%% file: lewis_weights_manuscript_skeleton.tex
\documentclass[11pt]{article}
\input{packages}
\input{formatting_macros}

\usepackage{thm-restate}

\input{macros}

\title{Computing Lewis Weights to High Precision by Fixed-Point Iteration}
\author{Swati Padmanabhan
\thanks{Department of Industrial and Systems Engineering, University of Minnesota, Twin Cities. pswt@umn.edu}}
\date{}

\begin{document}

\maketitle

\begin{abstract}
The $\ell_p$-Lewis weights of a matrix are defined by a fixed-point equation. For $p<4$, Cohen and Peng~\cite{cohen2015lp} showed that iterating an equivalent rearrangement of this equation computes Lewis weights to high precision; for $p\geq4$, prior high-precision methods instead use optimization-based approaches. We show that the direct Lewis fixed-point iteration, appearing in~\cite{lee2016faster}, computes Lewis weights to high precision for every $p>2$.

For a matrix $\bA\in\mathbb R^{m\times n}$ partitioned into row blocks $\bA_{[1]},\ldots,\bA_{[k]}$, we compute, for $p>2$, coordinatewise $\varepsilon$-approximate $\ell_p$ block Lewis weights in $O\left(p\log\frac{p\sqrt{ \sum_{i=1}^k\rank(\bA_{[i]}) }}{\varepsilon}\right)$ 
rounds of exact leverage-score-vector computations. For ordinary Lewis weights, this becomes
$O\left(p\log\frac{p\sqrt m}{\varepsilon}\right)$, improving the $O(p^2\log(m/\varepsilon))$ bound of Gribling, Sidford, and Zhang~\cite{pmlr-v336-gribling26a} for $p\geq4$.

\looseness=-1Our main observation is that each direct Lewis update contracts the KL divergence to the true weights by a factor of $1-\frac{2}{p}$. We also give an alternate explanation of this contraction through volume sampling and entropic independence. Synthetic experiments closely match our predicted local contraction rates and iteration counts grow approximately linearly with $p$; real-data experiments illustrate the information-concentration tradeoff of finite-$p$ block Lewis designs.
\end{abstract}
\newpage 
\section{Introduction}
\label{sec:intro}

The $\ell_p$ Lewis weights of a matrix are defined via a fixed-point equation \Cref{eq:Lewis-fixed-point}; this raises a natural computational question: can the corresponding fixed-point iteration be directly used to compute Lewis weights to high precision? 
The work of~\cite{cohen2015lp} answers this question affirmatively for $p<4$ by iterating an equivalent rearrangement of \Cref{eq:Lewis-fixed-point}. 
For $p\geq4$, however, the analysis of this method in~\cite{cohen2015lp} breaks down, and subsequent high-precision algorithms instead rely on optimization-based approaches~\cite{cohen2015lp,lee2016faster,lee2019solving,fazel2022computing,pmlr-v336-gribling26a}.
We show that the fixed-point iteration directly associated with \Cref{eq:Lewis-fixed-point}, appearing in~\cite{lee2016faster}, can, in fact, be used to compute Lewis weights to high precision for $p>2$, thereby answering this question affirmatively in this regime as well.

\begin{definition}[Lewis weights~\cite{lewis1978finite,cohen2015lp}]\label{def:LW}
Given a full-rank matrix $\bA\in\R^{m\times n}$ with $m\geq n$ and no zero rows, and a scalar $p\in(0,\infty)$, the $\ell_p$-Lewis weights of $\bA$ are the entries of the unique\footnote{Existence and uniqueness were first proven by D.R. Lewis~\cite{lewis1978finite}} vector $\wbar\in\Rmspos$ satisfying the equation
\[
\wbar_i=\levsc{i}\left(\Wbar^{1/2-1/p}\bA\right)
\text{ for all }i\in[m], \numberthis\label{eq:Lewis-fixed-point}
\]
where $\ba_i^\top$ is the $i^\mathrm{th}$ row of $\bA$, $\Wbar$ the diagonal matrix with vector $\bwbar$ along its diagonal, and $\levsc{i}(\bB):=\bb_i^\top(\bB^\top\bB)^{-1}\bb_i$ denotes the $i^\mathrm{th}$ leverage score of a full-column-rank matrix $\bB$. 
\end{definition}

\looseness=-1Lewis weights have found applications in $\ell_p$ sampling and regression~\cite{cohen2015lp,musco2022active} and in the construction of self-concordant barriers and related sampling algorithms~\cite{lee2019solving,laddha2020strong}.
Beyond the (row-wise) definition above, this fixed-point construction has a natural block analogue, in which a single weight is assigned to each block according to its total leverage after blockwise rescaling.
In this paper, we restrict ourselves to $p>2$, which is therefore also the regime for which we state \Cref{def:blockLW}.
\begin{definition}[Block Lewis weights~\cite{jambulapati2023sparsifying}]\label{def:blockLW} 
Given a full rank matrix $\bA\in\R^{m \times n}$ whose rows are partitioned into $k$ nonzero blocks $\bA_1, \bA_2, \dotsc, \bA_k$, and a scalar $p >2$, the block $\ell_p$-Lewis weights of $\bA$ are the entries of the  unique\footnote{Existence follows from the block-Lewis construction of~\cite{jambulapati2023sparsifying}; uniqueness for $p>2$ also follows from \Cref{lem:kl-contraction}.}
 vector $\wbar\in\Rkpos$ satisfying \[\wbar_i = \blevsc{i}\left(\Wbar^{1/2-1/p}\bA\right) \text{ for all } i \in [k],\] where $\bWbar:=\Diag(\wbar_1\bI_{m_1},\dotsc,\wbar_k\bI_{m_k})$, and
$
\blevsc{i}(\bB)
:=
\sum_{j \in I_i}\levsc{j}(\bB)
$
denotes the total leverage score of the $i^\mathrm{th}$ block of $\bB$.
\end{definition} 

When each block in \Cref{def:blockLW} comprises just one row, \Cref{def:blockLW} collapses to the (row-wise) Lewis weights from \Cref{def:LW}. 
Block Lewis weights were introduced in~\cite{jambulapati2023sparsifying} for sparsifying sums of norms, and have since been further developed for matrix block-norm approximation and distributionally robust regression~\cite{manoj2025change,manoj2026distributionally}.
We use the block formulation as a common setting in which to study the fixed-point dynamics of both row-wise and block Lewis weights.

\subsection{Our contribution}
\label{sec:intro-contribution}
In this subsection, we give an overview of our theoretical and experimental results. 

\paragraph{(1) Theoretical results.}  
Our main result is a guarantee for computing $\varepsilon$-approximate block Lewis weights for every $p>2$. 
Our algorithm (displayed in \Cref{alg:main}) iterates the fixed-point equation in \Cref{def:blockLW}; its row-wise version appears in~\cite[Algorithm~6]{lee2016faster}, where averaging the iterates yields one-sided approximate Lewis weights.  
We show that, with appropriate initialization and final output conversion, this iteration yields high-precision block Lewis weights.

\begin{theorem}[Block Lewis weights]\label{thm:blkLW}
Let $\bA\in\R^{m\times n}$ be a full-column-rank matrix partitioned into $k$ nonzero blocks $\bAblk{1}, \bAblk{2}, \ldots,\bAblk{k}$ with $\bAblk{i}\in\R^{m_i\times n}$ for each $i\in [k]$. Assume  $p>2$, and let $\bwbar\in\Rkpos$ denote the $\ell_p$-block-Lewis weights of $\bA$. Given an accuracy parameter $\eps\in(0,1)$, there is an algorithm (\Cref{alg:main}) which returns $\bwhat\in\Rkpos$ satisfying
\[
  (1-\eps)\wbar_i
  \le \what_i
  \le (1+\eps)\wbar_i
  \qquad\text{for every }i\in[k],
\]
in $O\left(
    p\log\left(\frac{p \sqrt{\sum_{i=1}^k \rank\left(\bAblk{i}\right)}}{\eps}\right)\right)$  iterations. Each iteration computes all $m$ exact row leverage scores of a matrix $\bD\bA$, where $\bD = \Diag\left(d_1\bI_{m_1}, d_2\bI_{m_2}, \dotsc, d_k\bI_{m_k}\right)$ for some positive $d_1, d_2, \dotsc, d_k$. 
\end{theorem}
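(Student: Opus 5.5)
The plan is to analyze the direct update $\bw^{(t+1)}_i = \blevsc{i}\bigl((\bW^{(t)})^{1/2-1/p}\bA\bigr)$ as a contraction in a Kullback--Leibler-type divergence between the iterate and the true weights, and then convert that divergence bound into a coordinatewise multiplicative bound. All weight vectors involved live on the scaled simplex with total mass $d:=\sum_{i=1}^k \rank(\bAblk{i})$. Indeed, since $\bA$ has full column rank and the block leverage scores of any $\bD\bA$ sum to $n$, every iterate has total mass $n$. I will use the normalization under which the relevant mass is tied to $\sum_i \rank(\bAblk{i})$, because block $i$ can carry at most $\rank(\bAblk{i})$ leverage.

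The key step is the contraction lemma (\Cref{lem:kl-contraction}):
\[
\mathrm{KL}(\bwbar\,\|\,\bw^{(t+1)}) \le \Bigl(1-\tfrac{2}{p}\Bigr)\mathrm{KL}(\bwbar\,\|\,\bw^{(t)}),
\]
or possibly the same statement with the arguments in the other order. I would derive it from a convexity property of the map $\bw\mapsto \log\det(\bA^\top \bW^{1-2/p}\bA)$. Write $\bw=\bwbar\circ e^{\bx}$. Then the update in log-coordinates is $\bx\mapsto \nabla_{\bx}\log\det\bigl(\bA^\top \bWbar^{1-2/p}\Diag(e^{(1-2/p)\bx})\bA\bigr)$, normalized by $\bwbar$. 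The function $\bx\mapsto\log\det(\sum_i e^{x_i}\bM_i)$ is convex, and, reflecting entropic independence of volume sampling, its gradient map is nonexpansive in the relevant Bregman/KL geometry. The factor $1-2/p$ then enters as a Lipschitz constant, because the exponent $1-2/p$ scales $\bx$. Concretely, I would show
\[
\mathrm{KL}\bigl(\bwbar\,\|\,\sigma(\bwbar^{1-2/p}\bw^{\,\cdot})\bigr)\le \Bigl(1-\tfrac2p\Bigr)\mathrm{KL}(\bwbar\|\bw)
\]
via the log-concavity of the block-determinant polynomial, that is, the Gibbs variational principle applied to the volume-sampling distribution. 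Uniqueness of the fixed point follows immediately from this contraction.

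Next comes initialization. I would start from $\bw^{(0)}$ proportional to $\rank(\bAblk{i})$. Since $\wbar_i\le\rank(\bAblk{i})$, the initial divergence $\mathrm{KL}(\bwbar\|\bw^{(0)})$ is at most $O(d\log d)$, or even $O(d)$ in the normalized form. The poly-$p$ factor in the logarithm suggests the starting KL is bounded by something like $\mathrm{poly}(p)\cdot d$. After $T=O\bigl(p\log(\cdot/\delta)\bigr)$ iterations, the divergence is at most $\delta$, because $(1-2/p)^T\le e^{-2T/p}$.

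The last step, which I expect to be the main obstacle, is converting small KL into coordinatewise $(1\pm\eps)$ accuracy, since KL controls only an $\ell_1$/$\chi^2$-type average error. I would use Pinsker together with the fact that each $\wbar_i$ cannot be too tiny relative to the rest. Small coordinates are dangerous, so I would apply one additional "output conversion" step: take $\bwhat$ to be a leverage-score evaluation at the final iterate. Each such step is Lipschitz in log-ratio, since $\|\log(\bw^{(t+1)}/\bwbar)\|_\infty\le(1-2/p)\|\log(\bw^{(t)}/\bwbar)\|_\infty$ by monotonicity of leverage scores under rescaling (Loewner order). Combining the two bounds, KL gives $\sum_i (w_i-\wbar_i)^2/w_i\lesssim\delta$, hence a multiplicative error of $\sqrt{\delta d}$ on each block via $\wbar_i\ge$ something. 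The $\ell_\infty$ log-contraction then pushes this to $\eps$. Choosing $\delta=\eps^2/(p^2 d)$ produces exactly the claimed $\log(p\sqrt d/\eps)$ inside an $O(p\cdot)$ iteration count. Each iteration computes block leverage scores of $\bD\bA$ with $\bD$ blockwise constant, $d_i=(w_i^{(t)})^{1/2-1/p}$, as required.
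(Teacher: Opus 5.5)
Your plan has the right engine: the KL contraction of \Cref{lem:kl-contraction}, started from weights proportional to the block ranks. The step you flag as the main obstacle, turning small KL into coordinatewise $(1\pm\eps)$ accuracy, does not go through.

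First, the one-step bound $\|\log(\calT(\bw)/\bwbar)\|_\infty\le(1-\frac{2}{p})\|\log(\bw/\bwbar)\|_\infty$ does not follow from Loewner monotonicity. Since $\calT_i(\bw)=w_i^{1-2/p}\Tr(\bAblk{i}\bG(\bw)^{-1}\bAblk{i}^\top)$ with $\bG(\bw)=\bA^\top\bW^{1-2/p}\bA$, a coordinatewise error $e^{\pm\eta}$ in $\bw$ produces an error $e^{\pm(1-2/p)\eta}$ in the prefactor $w_i^{1-2/p}$ and another $e^{\pm(1-2/p)\eta}$ in the trace. You therefore only get the factor $2(1-\frac{2}{p})$, which is $\ge 1$ for $p\ge4$. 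That is exactly the $p<4$ barrier of Cohen--Peng that the theorem is meant to overcome. It is not just a weakness of the argument: in the row-wise Jacobian $(1-\frac{2}{p})(\bI-\bK)$ of \Cref{lem:local-dyn}, $\bK$ is row-stochastic with $\bK_{ii}=\wbar_i$. Its $\ell_\infty$ operator norm is therefore $2(1-\frac{2}{p})\max_i(1-\wbar_i)$, which exceeds $1$ for $p>4$ once some weight is small. Such expanding directions can be realized inside $\Omega$, e.g.\ for $\bA$ block diagonal with two identical diagonal blocks.

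Second, the coordinatewise starting bound you want to feed into that contraction is unavailable. A KL divergence at most $\delta$ only gives $|\sqrt{w_i}-\sqrt{\wbar_i}|\le\sqrt{\delta}$, i.e.\ relative error of order $\sqrt{\delta/\wbar_i}$. There is no lower bound on $\wbar_i$ in terms of $R=\sum_i\rank(\bAblk{i})$, $p$, or $\eps$, yet the iteration count must not depend on $\min_i\wbar_i$. For the same reason, outputting a leverage-score evaluation $\calT(\bw)$ at the last iterate is not enough. It keeps the prefactor $w_i^{1-2/p}$, whose coordinatewise accuracy you never control.

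The missing idea is to control the Gram matrix $\bG(\bw^{(t)})$ spectrally instead of controlling $\bw^{(t)}$ coordinatewise.
\begin{itemize}
\item The paper proves $\logdetdiv{\bG(\bw)}{\bG(\bwbar)}\le(1-\frac{2}{p})\kldiv{\bwbar}{\bw}$ from two ingredients. One is $\Tr(\bG(\bw)\bG(\bwbar)^{-1})=\sum_i w_i^{1-2/p}\wbar_i^{2/p}\le n$, by H\"older. The other is convexity of $\bx\mapsto\log\det\bG(e^{\bx})$, whose gradient is $(1-\frac{2}{p})\calT(e^{\bx})$. The linear lower bound at $\log\bwbar$ then gives $\log\det\bG(\bwbar)-\log\det\bG(\bw)\le(1-\frac{2}{p})\kldiv{\bwbar}{\bw}$.
\item A Bregman divergence at most $\delta$ yields $(1\pm2\sqrt{\delta})$ spectral closeness. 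Hence $\Tr(\bAblk{i}\bG(\bw^{(N-1)})^{-1}\bAblk{i}^\top)$ matches $\wbar_i^{2/p}$ up to $1\pm O(\sqrt{\delta})$ uniformly in $i$, however small $\wbar_i$ is.
\item The output $\what_i=\bigl(w_i^{(N)}/(w_i^{(N-1)})^{1-2/p}\bigr)^{p/2}$ is exactly this trace raised to the power $p/2$; the division strips off the uncontrolled prefactor. That exponent is why $\delta=\eps^2/(64p^2)$ is needed.
\item Combined with the initial bound $\kldiv{\bwbar}{\bw^{(0)}}\le n\log(R/n)\le R$, this gives $N=\lceil p\log(8p\sqrt{R}/\eps)\rceil$. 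Note the initial bound has no $p$-dependence, and the iterates have total mass $n$, not $R$.
\end{itemize}

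Your derivation of the contraction itself is also only heuristic (``nonexpansive in the relevant geometry''). The paper proves it directly via Jensen, matrix Cauchy--Schwarz and H\"older. Alternatively, it uses entropic independence of volume sampling together with the fact that $\bwbar$ maximizes $\log\det\bG$ over $\Omega$.
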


\noindent Specializing each block to a single row gives the following guarantee. 
\begin{corollary}
\label{cor:ordinary-main}
Let  $\bA\in\R^{m\times n}$ be a full-column-rank matrix with no zero rows, let $p>2$, and let $\bwbar\in\Rmspos$ be the vector of its $\ell_p$-Lewis weights. Given an accuracy parameter $\eps\in(0,1)$, there is an algorithm which returns $\bwhat\in\Rmspos$ satisfying
\[
  (1-\eps)\wbar_i
  \le \what_i
  \le (1+\eps)\wbar_i \quad
\text{ for every }i\in[m],
\]
in $O\left(
    p\log\left(\frac{p\sqrt{m}}{\eps}\right)
  \right)$ iterations. Each iteration computes all $m$ exact row leverage scores of $\bD\bA\in\R^{m\times n}$ where $\bD = \Diag\left(d_1, d_2, \dotsc, d_m\right)$ for some positive $d_1, d_2, \dotsc, d_m$. 
\end{corollary}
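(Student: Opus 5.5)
The corollary is the special case of \Cref{thm:blkLW} in which every block consists of a single row, so I will derive it directly from the theorem.

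Take $k=m$ and let the $i$-th block be $\bAblk{i}=\ba_i^\top\in\R^{1\times n}$, so $m_i=1$ for every $i\in[m]$. I would check the hypotheses and the conclusions in the following order.

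\emph{The hypotheses of \Cref{thm:blkLW} hold.} The matrix $\bA$ has full column rank by assumption. The blocks are nonzero because $\bA$ has no zero rows.

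\emph{Block Lewis weights coincide with ordinary Lewis weights.} When $m_i=1$, the block scaling matrix $\bWbar=\Diag(\wbar_1\bI_{1},\dots,\wbar_m\bI_{1})$ is exactly the diagonal matrix $\Wbar$ of \Cref{def:LW}. Each block index set is $I_i=\{i\}$, so $\blevsc{i}(\bB)=\levsc{i}(\bB)$. Hence the fixed-point equation of \Cref{def:blockLW} is literally \Cref{eq:Lewis-fixed-point}. By the uniqueness in \Cref{def:LW} (Lewis), the block Lewis weights $\bwbar\in\Rmspos$ are the $\ell_p$-Lewis weights. The output $\bwhat$ of \Cref{alg:main} therefore satisfies the stated two-sided $(1\pm\eps)$ guarantee coordinatewise.

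\emph{The iteration count.} Each nonzero $1\times n$ block has $\rank(\bAblk{i})=1$, so $\sum_{i=1}^k\rank(\bAblk{i})=m$. The bound $O\bigl(p\log(p\sqrt{\sum_i\rank(\bAblk{i})}/\eps)\bigr)$ then becomes $O\bigl(p\log(p\sqrt m/\eps)\bigr)$.

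\emph{The per-iteration cost.} With $m_i=1$, the block diagonal $\bD=\Diag(d_1\bI_{1},\dots,d_m\bI_{1})$ is simply $\Diag(d_1,\dots,d_m)$. Each iteration computes the $m$ exact leverage scores of $\bD\bA$, as claimed.

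There is no real obstacle. The only point needing care is identifying the block and row-wise fixed points: the block definition gives uniqueness only for $p>2$, and the identification relies on Lewis's uniqueness theorem for the row-wise equation, which is fine here because $p>2$ is assumed.
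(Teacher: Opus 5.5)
Your proposal is correct and follows the paper's approach: the paper obtains \Cref{cor:ordinary-main} by specializing \Cref{thm:blkLW} to single-row blocks. In that case the block fixed-point equation becomes \Cref{eq:Lewis-fixed-point}, $\rank(\bAblk{i})=1$ gives $\sum_{i}\rank(\bAblk{i})=m$, and the block scaling is an ordinary positive diagonal matrix; since the two fixed-point equations coincide, the uniqueness of either one already gives the identification.
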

For $p\geq4$, \Cref{cor:ordinary-main} gives, to our knowledge, the first nearly dimension-free high-precision guarantee for the direct Lewis fixed-point iteration. 
In earlier work, Cohen and Peng~\cite{cohen2015lp} show that an equivalent rearrangement of the Lewis fixed-point equation is contractive in coordinatewise multiplicative error only for $p<4$ (see \Cref{sec:intro-approachOverview}). Our key observation is that the direct Lewis update nevertheless has a global \emph{KL contraction}: each update contracts the KL divergence to the true Lewis weights by a factor of $1-\frac{2}{p}$, which in turn drives our $\widetilde{O}(p)$ rate.

In addition to a direct proof of this contraction in \Cref{sec:kl-decay}, we  present in \Cref{sec:global-intuition} an alternate proof through volume sampling and entropic independence~\cite{derezinski2021determinantal,anari2022entropic}. We believe this  alternate perspective  makes the origin of the contraction factor $1-\frac{2}{p}$ more intuitive.

While the above contraction yields the stated \emph{global} iteration bounds, it does not characterize the asymptotic convergence rate near the fixed point $\bwbar$.
We therefore also study the \emph{local} dynamics of the row-wise iteration by linearizing the Lewis update at $\bwbar$.
The spectral radius of the resulting Jacobian predicts the asymptotic local contraction rate, which we  corroborate experimentally. 

\begin{remark} For $p \geq4$, \Cref{cor:ordinary-main} improves upon the $\widetilde{O}(p^2)$ rate of \cite{pmlr-v336-gribling26a}. Our total runtime is $O\left(p mn^{\omega-1}\log\left(\frac{p\sqrt{m}}{\varepsilon}\right)\right)$, where $\omega$ denotes the matrix-multiplication exponent~\cite{huang1998fast, alman2025more}. Each iteration computes one exact leverage-score vector in $O(mn^{\omega-1})$ operations. Our analysis is in exact real arithmetic, and we make no claims about bit complexity, approximate leverage scores, or floating-point stability. Our result has implications in the estimation of $\ell_p$ sensitivities, regularized D-optimal design, and the evaluation of the Lewis weights barrier, all of which we describe in \Cref{sec:apps}.
\end{remark} 

\paragraph{(2) Experimental results.}
We complement our theoretical results with synthetic and real-data experiments. 
Our synthetic experiments first test the local fixed-point dynamics mentioned above. 
Near the fixed point, the observed contraction rates closely agree with the rates predicted by our local analysis (\Cref{fig:local-dyn}). 
At the level of overall iteration counts, the number of iterations grows approximately linearly with $p$ for row-wise Lewis weights on Gaussian matrices (\Cref{fig:A1}), and this qualitative behaviour persists on a structured, leverage-heterogeneous matrix family (\Cref{fig:structured-ordinary}). 
We also test the fixed-point iteration in floating-point arithmetic on ill-conditioned matrices with identical exact Lewis weight trajectories; the observed iteration counts remain essentially unchanged over a broad range of conditioning, until finite precision imposes an accuracy floor (\Cref{fig:a4-map-error,fig:a4-residual-floor}). 
The block Lewis weight iteration exhibits an analogous dependence on $p$ (\Cref{fig:block-scaling}). 

The simplicity of our algorithm also makes it easy to compute block Lewis weights across a range of values of $p$ on real datasets. 
Motivated by the connection between Lewis weights and regularized D-optimal design (cf. \Cref{sec:apps}), we study these designs on the Intel Berkeley Research Lab sensor dataset~\cite{guestrin2004distributed} and meteorological data from the NOAA National Data Buoy Center. 
As $p$ increases, the finite-$p$ designs interpolate from diffuse, near-uniform allocations toward the more concentrated D-optimal design, and can recover much of the available D-optimal information gain before becoming comparably concentrated (\Cref{fig:design-geometry}). 
We then test these designs under different patterns of block unavailability. 
When few blocks are unavailable, larger values of $p$ preserve more information; as the number of unavailable blocks increases, the values of $p$ preserving the most information shift systematically downward (\Cref{fig:block-failure}).
Thus, our experiments illustrate both the dynamics of the fixed-point iteration and the tradeoff between information and concentration exhibited by the finite-$p$ block Lewis designs it computes.

\subsection{Prior work}\label{sec:intro-approachOverview}
Prior work on computing Lewis weights to high precision exhibits a sharp difference between the regimes $p<4$ and  $p\geq 4$.

\paragraph{The regime $p<4$.}
For $p<4$, Cohen and Peng~\cite{cohen2015lp} analyze an equivalent rearrangement of \Cref{eq:Lewis-fixed-point}. 
In particular, one update of their algorithm takes the form
\[
w_i^+
:=
\left(
\ba_i^\top
\left(\bA^\top\bW^{1-2/p}\bA\right)^{-1}
\ba_i
\right)^{p/2}.
\]
They show that if the current weights are within a multiplicative factor $c$ of the true Lewis weights, then one such update improves this factor to $c^{|p/2-1|}$. 
Thus, the coordinatewise multiplicative error contracts when $|\frac{p}{2}-1|<1$, or equivalently when $p<4$; at $p=4$ the factor reaches $1$, and for $p>4$ it exceeds $1$. 
Consequently, for $p\geq4$, this analysis does not prove high-precision convergence.

\paragraph{The regime $p\geq4$.} 
For $p\geq4$, subsequent algorithms instead turned to optimization formulations of Lewis weights. 
~\cite{cohen2015lp} use the determinant-maximization formulation together with cutting-plane methods; Lee~\cite{lee2016faster} develops a mirror-descent-based approach; and Lee and Sidford~\cite{lee2019solving} optimize a volumetric potential, using a homotopy argument to obtain a sufficiently good initialization. 
These methods establish efficient computation in this regime, but do not give a nearly dimension-free number of leverage-score computations with logarithmic dependence on $\frac{1}{\eps}$.

\paragraph{High precision for $p\geq4$.} 
Fazel, Lee, Padmanabhan, and Sidford~\cite{fazel2022computing} gave the first algorithm for $p\geq4$ that computes Lewis weights to high precision using a nearly dimension-free number of leverage-score-vector computations, using
$O\left(p^3\log\left(\frac{mp}{\eps}\right)\right)$
such computations. 
Their algorithm minimizes a regularized log-determinant objective via a quasi-Newton-type update. 
Sufficient progress is guaranteed under a certain rounding condition, which is either restored through a separate rounding procedure or maintained by choosing coordinate-dependent step sizes. 
Gribling, Sidford, and Zhang~\cite{pmlr-v336-gribling26a} subsequently improved the dependence on $p$ from cubic to quadratic, obtaining
$O\left(p^2\log\left(\frac{m}{\eps}\right)\right)$
leverage-score-vector computations. 
Their analysis uses local relative smoothness to analyze gradient-based methods for convex formulations of Lewis weights.

\paragraph{Block Lewis weights.}
Block Lewis weights were introduced by Jambulapati, Lee, Liu, and Sidford~\cite{jambulapati2023sparsifying} in the study of sparsifying sums of norms and further developed by Manoj and Ovsiankin~\cite{manoj2025change} for matrix block-norm sparsification and change-of-measure arguments. 
The algorithms in these works compute sampling weights or block Lewis overestimates sufficient for these applications, rather than coordinatewise high-precision approximations to the block Lewis weights, which is our focus in \Cref{thm:blkLW}.

\paragraph{Our approach.}
Departing from the convex optimization formulations used in the high-precision algorithms above for $p\geq4$, we return to the fixed-point iteration directly associated with \Cref{eq:Lewis-fixed-point}. 
With a suitable initialization and final output conversion, our algorithm (\Cref{alg:main}) repeatedly applies this fixed-point update. 
We show that, for $p>2$, each update contracts the KL divergence to the true Lewis weights, thereby allowing the direct fixed-point approach to extend beyond $p=4$.

\subsection{Applications}\label{sec:apps}

Lewis weights play several conceptually distinct roles in algorithms and optimization, from data-dependent importance scores, to experiment design, to self-concordant barrier construction. Below, we review a small subset of these connections.

\paragraph{Sensitivity estimation.}
Sensitivity sampling is an importance-sampling framework in which data points are sampled according to the extent to which they can influence the objective, and has been used extensively in  coresets~\cite{langberg2010universal, varadarajan2012sensitivity}, clustering~\cite{feldman2011unified}, logistic regression~\cite{munteanu2018coresets}, and $\ell_p$ subspace embeddings~\cite{wy2023ICML}. 
The $\ell_p$ sensitivity of the $i^\mathrm{th}$ row of $\bA$ is defined as
$\sens{i}(\bA) := \sup_{\bx\neq \bzero} \frac{|\ba_i^\top \bx|^p}{\|\bA\bx\|_p^p}$.  
Unlike Lewis weights, computing sensitivities directly generally requires solving a separate maximization problem for each row, and recent work has therefore studied efficient methods for approximating them~\cite{padmanabhan2023computing}. 
 Lewis weights are closely related to sensitivities and can therefore serve as efficiently computable surrogates for them. In particular, for $p>2$, it is known that $\sens{i}(\bA) \leq n^{p/2-1}\wbar_i$ \cite[Fact~3.3]{padmanabhan2023computing}. Via a simple ``witness'' argument similar in spirit to that in
\cite[Fact~A.1]{padmanabhan2023computing}, we additionally observe that
the reverse inequality holds as well:
 
\begin{restatable}{proposition}{sensprop}\label{prop:sens}
Let $\bA\in\R^{m\times n}$ have full column rank and no zero rows, let $p>2$, and let $\bwbar\in\Rmspos$ denote its $\ell_p$-Lewis weights. Then, for every $i\in[m]$, we have  $\wbar_i\leq \sens{i}(\bA)$. Combined with the known bound $\sens{i}(\bA)\leq n^{p/2-1}\wbar_i$, this gives
\[\wbar_i\leq \sens{i}(\bA)\leq n^{p/2-1}\wbar_i.\]
Consequently, if $\bwhat$ is an $\eps$-approximation to $\bwbar$, i.e.,
$(1-\eps)\wbar_i\leq\what_i\leq(1+\eps)\wbar_i$ for every $i\in[m]$, then, simultaneously for every $i\in[m]$, we have 
$\frac{\what_i}{1+\eps}
\leq
\sens{i}(\bA)
\leq
\frac{n^{p/2-1}\what_i}{1-\eps}.$
\end{restatable}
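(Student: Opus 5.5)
The plan is a witness argument: for each $i$ we exhibit a specific $\bx$ and lower-bound the ratio in the definition of $\sens{i}(\bA)$ by $\wbar_i$. Let $\bM:=\bA^\top\bWbar^{1-2/p}\bA$, which is positive definite because $\bA$ has full column rank and $\bwbar>0$. Since $\levsc{i}(\bWbar^{1/2-1/p}\bA)=\wbar_i^{1-2/p}\ba_i^\top\bM^{-1}\ba_i$, the fixed-point equation \Cref{eq:Lewis-fixed-point} becomes $\wbar_i^{2/p}=\ba_i^\top\bM^{-1}\ba_i$ for every $i$. Summing leverage scores also gives $\sum_j\wbar_j=n$, although the argument will not need this.

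The witness is $\bx:=\bM^{-1}\ba_i$. For this choice, $\ba_i^\top\bx=\ba_i^\top\bM^{-1}\ba_i=\wbar_i^{2/p}$, so the numerator is $|\ba_i^\top\bx|^p=\wbar_i^2$. For the denominator, apply Hölder's inequality with the weights $\wbar_j$. The key identity is
\[
\sum_j \wbar_j^{1-2/p}(\ba_j^\top\bx)^2=\bx^\top\bM\bx=\ba_i^\top\bM^{-1}\ba_i=\wbar_i^{2/p}.
\]
The aim is to upper-bound $\|\bA\bx\|_p^p=\sum_j|\ba_j^\top\bx|^p$. Cauchy–Schwarz gives $(\ba_j^\top\bx)^2\le(\ba_j^\top\bM^{-1}\ba_j)(\ba_i^\top\bM^{-1}\ba_i)=\wbar_j^{2/p}\wbar_i^{2/p}$. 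Hence $|\ba_j^\top\bx|^{p-2}\le \wbar_j^{1-2/p}\wbar_i^{1-2/p}$, and this step uses $p>2$. Therefore
\[
\|\bA\bx\|_p^p=\sum_j|\ba_j^\top\bx|^{p-2}(\ba_j^\top\bx)^2\le \wbar_i^{1-2/p}\sum_j\wbar_j^{1-2/p}(\ba_j^\top\bx)^2=\wbar_i^{1-2/p}\cdot\wbar_i^{2/p}=\wbar_i.
\]
Combining the two bounds gives $\sens{i}(\bA)\ge \wbar_i^2/\wbar_i=\wbar_i$. Here $\bx\neq\bzero$ because $\ba_i\neq\bzero$.

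The upper bound $\sens{i}(\bA)\le n^{p/2-1}\wbar_i$ is quoted from \cite[Fact~3.3]{padmanabhan2023computing}. The final sandwich then follows from $\wbar_i\ge\what_i/(1+\eps)$ and $\wbar_i\le\what_i/(1-\eps)$.

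The main obstacle is the choice of witness and the step bounding $|\ba_j^\top\bx|^{p-2}$ by weight-dependent terms. Once the fixed-point identity is rewritten in the $\bM$-form, Cauchy–Schwarz in the $\bM^{-1}$ inner product handles this cleanly, so I expect the remaining steps to be routine.
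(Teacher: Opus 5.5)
Your proof is correct and is essentially the paper's argument. Your witness $\mathbf{M}^{-1}\ba_i$ (with $\mathbf{M}=\bA^\top\bWbar^{1-2/p}\bA$) is, up to scaling, the maximizer of $|\bb_i^\top\bx|^2/\|\bB\bx\|_2^2$ for $\bB=\bWbar^{1/2-1/p}\bA$ that the paper uses, and your Cauchy--Schwarz bound $(\ba_j^\top\mathbf{M}^{-1}\ba_i)^2\le\wbar_j^{2/p}\wbar_i^{2/p}$ and the resulting bound on $|\ba_j^\top\bx|^{p-2}$ are exactly the paper's $z_j\le 1$ and $z_j^{p/2}\le z_j$, written out explicitly rather than through the $\ell_2$-sensitivity characterization of leverage scores.
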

 
 The first inequality in \Cref{prop:sens} turns the previously known one-sided Lewis weight surrogate for sensitivities into a two-sided comparison. Combined with \Cref{cor:ordinary-main}, it yields simultaneous bounds on all $m$ sensitivities using only a high-precision Lewis weight computation, with no additional individual sensitivity evaluations. In particular, defining $\maxsens(\bA):=\max_{i\in[m]}\sens{i}(\bA)$ and taking maxima in \Cref{prop:sens} gives an $n^{p/2-1}(1+O(\eps))$-factor approximation to $\maxsens(\bA)$. The best prior general-purpose guarantee we are aware of for estimating maximum $\ell_p$ sensitivity without explicitly computing every individual sensitivity is the $O(n^{p/2})$ factor of~\cite{padmanabhan2023computing} for $p>2$. Thus, our guarantee improves the dimension dependence of the approximation factor. 

\paragraph{Regularized D-optimal design.}
D-optimal experimental design is a classical problem in statistics  in which resources are allocated among a collection of experiments so as to maximize the determinant of the resulting information matrix ~\cite{pukelsheim2006optimal,pmlr-v70-allen-zhu17e,pmlr-v99-madan19a,singh2020approximation}. 
It is closely connected to the John ellipsoid problem~\cite{john1948extremum}, which is dual to a relaxed D-optimal-design problem~\cite{Todd2016}. 

For $p>2$, computing the $\ell_p$-Lewis weights is equivalent, after a change of variables, to solving a $\frac{p}{p-2}$-regularized D-optimal-design problem~\cite{cohen2015lp,lee2019solving,fazel2022computing}. 
The regularizer incentivizes spreading the allocation across experiments ---  as $p$ increases, the exponent $\frac{p}{p-2}$ decreases toward $1$, weakening the incentive, and the solution approaches the unregularized D-optimal design. 
This  interpolation between concentrated and uniform allocations motivates our experiments in \Cref{sec:real-data-expts}, where we empirically study how finite-$p$ block Lewis designs vary with $p$.

In light of the above connection, \Cref{cor:ordinary-main} gives a high-precision solver for this regularized design problem using $
O\left(p\log\frac{p\sqrt m}{\delta}\right)$ 
exact leverage-score-vector computations, where $\delta$ denotes the desired coordinatewise accuracy of the corresponding Lewis weights. 

\paragraph{Lewis barriers and Dikin walks.}
Self-concordant barriers are a foundational tool in convex optimization~\cite{nesterov1994interior} and also underlie many algorithms for sampling from polytopes. 
For polytopes, \cite{lee2019solving} introduced a nearly optimal self-concordant barrier based on $\ell_p$-Lewis weights with $p=\Theta(\log m)$; in particular, their barrier is $O(n\log^5 m)$-self-concordant. Lewis weight barriers and Dikin-type walks have since been used in many sampling algorithms~\cite{laddha2020strong,gatmiry2023sampling,jiang2024regularized,kook2024gaussian}. For the~\cite{lee2019solving} barrier, accurately computing the underlying Lewis weights is an important subroutine. Specializing \Cref{cor:ordinary-main} to $p=\Theta(\log m)$ reduces this computation to 
$O\left(\log m\log\frac{\sqrt m\log m}{\delta}\right)$
exact leverage-score-vector computations for coordinatewise relative accuracy $\delta$. 

\subsection{Notation and preliminaries}\label{sec:intro-notationPrelims}

Throughout, we assume $p>2$. We use boldface to denote non-scalar quantities, with boldfaced lowercase letters for vectors and boldfaced uppercase letters for matrices. We use $\bA$ to denote our full-column-rank $m\times n$ ($m\geq n$) real-valued input matrix. When discussing block Lewis weights, we partition the rows of $\bA$ into $k$ nonzero blocks, using $\bAblk{i}\in\R^{m_i\times n}$ for the $i^\mathrm{th}$ block. We use $\bwbar$ for the vector of Lewis weights of $\bA$; these weights are either row-wise or block-wise, as will be clear from context: $\bwbar\in\Rmspos$ are the row-wise weights and $\bwbar\in\Rkpos$ the block-wise weights.

We use $\levsc{i}(\bB)$ to denote the $i^\mathrm{th}$ leverage score of a full-column-rank matrix $\bB$. When its rows are partitioned into blocks as above, we denote the $i^\mathrm{th}$ block leverage score by $\blevsc{i}(\bB):=\Tr\left(\bBblk{i}(\bB^\top\bB)^{-1}\bBblk{i}^\top\right)$. We have a slight notational overloading for diagonal matrices: for a row-weight vector (e.g., $\bw\in\R^m$), we use its uppercase boldfaced form ($\bW$ in this example) to denote the diagonal matrix $\bW_{ii}=w_i$; in the block setting, for a block-weight vector $\bw\in\R^k$, we instead use $\bW=\Diag\left(w_1\bI_{m_1},w_2\bI_{m_2},\dotsc,w_k\bI_{m_k}\right)$, so that the weight associated with a block is repeated on each of its rows. The intended meaning will be clear from the dimension of the weight vector. We use $w_i^{(t)}$ to denote the $i^\mathrm{th}$ entry of $\bw^{(t)}$, where the superscript $t$ indexes the algorithm's iteration.

We use $\one$ to denote the vector of all ones, $\bI$ for the identity matrix, and $\bzero$ for the zero vector or matrix;  sizes are implied by context. For compatible symmetric matrices $\bX,\bY$, we use $\bX\succeq\bY$ to mean that $\bX-\bY$ is positive semidefinite. For a positive integer $s$, we use $[s]$ for the set $\{1,2,\dotsc,s\}$. For vectors, the operations of exponentials, logarithms, and powers are applied coordinatewise.

The identities in \Cref{fact:block-levsc} are block analogues of classical facts about leverage scores, which we provide for completeness (with a proof in \Cref{sec:app-technical-lemmas}). In particular, the second identity specializes, for graph incidence matrices, to Foster's theorem~\cite{foster1949average}. The formulation of block leverage scores we use previously appeared in~\cite{KLPSS16,xu2016sub} and has been widely used in several forms in, e.g., matrix sparsification and numerical linear algebra.

\begin{restatable}{fact}{blocklevscfact}\label{fact:block-levsc}
Let $\bB:=\begin{bmatrix}\bB_{[1]}^\top&\bB_{[2]}^\top&\cdots&\bB_{[k]}^\top\end{bmatrix}^\top\in\R^{m\times n}$ have full column rank, with $\bB_{[i]}\in\R^{m_i\times n}$, and define $\blevsc{i}:=\Tr\left(\bB_{[i]}(\bB^\top\bB)^{-1}\bB_{[i]}^\top\right)$. Then:
\begin{enumerate}
\item $0\leq\blevsc{i}(\bB)\leq\rank(\bB_{[i]})$ for every $i\in[k]$.
\item $\sum_{i=1}^k\blevsc{i}(\bB)=n$.
\end{enumerate}
\end{restatable}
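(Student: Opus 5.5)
We work with the orthogonal projection onto the column space of $\bB$. Since $\bB$ has full column rank, $\bB^\top\bB$ is invertible and $\bP:=\bB(\bB^\top\bB)^{-1}\bB^\top\in\R^{m\times m}$ is symmetric and idempotent ($\bP^2=\bP$), with $\rank(\bP)=n$. The $i^\mathrm{th}$ diagonal block of $\bP$, indexed by the rows $I_i$ of block $i$, is exactly $\bP_{ii}:=\bB_{[i]}(\bB^\top\bB)^{-1}\bB_{[i]}^\top$. Hence $\blevsc{i}(\bB)=\Tr(\bP_{ii})$. Both claims will be read off from this representation.

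\textbf{Item 2.} Summing over blocks gives
\[
\sum_{i=1}^k\blevsc{i}(\bB)=\sum_{i=1}^k\Tr(\bP_{ii})=\Tr(\bP).
\]
By cyclicity, $\Tr(\bP)=\Tr\left((\bB^\top\bB)^{-1}\bB^\top\bB\right)=\Tr(\bI_n)=n$.

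\textbf{Item 1, lower bound.} The diagonal block $\bP_{ii}$ is positive semidefinite, because $(\bB^\top\bB)^{-1}\succeq\bzero$. Its trace is therefore nonnegative.

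\textbf{Item 1, upper bound.} Since $\bP$ is an orthogonal projection, $\bzero\preceq\bP\preceq\bI_m$. A principal submatrix inherits this ordering, so $\bzero\preceq\bP_{ii}\preceq\bI_{m_i}$, and every eigenvalue of $\bP_{ii}$ lies in $[0,1]$. Moreover $\rank(\bP_{ii})\le\rank(\bB_{[i]})$, since $\bB_{[i]}$ is a left factor of $\bP_{ii}$. Hence $\bP_{ii}$ has at most $\rank(\bB_{[i]})$ nonzero eigenvalues, each at most $1$, which gives $\Tr(\bP_{ii})\le\rank(\bB_{[i]})$.

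The only step needing care is the upper bound, which combines two facts: the eigenvalues of a principal submatrix of a projection are bounded by $1$, and the rank of $\bP_{ii}$ is controlled by $\rank(\bB_{[i]})$. Both are standard, so no real obstacle is expected.
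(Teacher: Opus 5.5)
Your proposal is correct and follows essentially the same route as the paper. Both arguments write $\blevsc{i}(\bB)$ as the trace of a principal block of the orthogonal projector $\bB(\bB^\top\bB)^{-1}\bB^\top$, combine $\bzero\preceq\cdot\preceq\bI$ with the rank bound $\rank(\bB_{[i]})$ for item 1, and sum the block traces to $\Tr(\bP)=n$ for item 2, with your cyclicity computation and left-factor rank justification filling in steps the paper states without comment.
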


\section{Algorithm and analysis}
\label{sec:main-results}
In this section, we describe our algorithm and analysis that guarantee \Cref{thm:blkLW}. Our algorithm (displayed in \Cref{alg:main}) is a slight modification of \cite[Algorithm 6]{lee2016faster}. Given the $k$-block-partitioned matrix $\bA\in\R^{m\times n}$, the parameter $p>2$, and an accuracy $\eps\in(0,1)$, it first computes $r_i:=\rank(\bAblk{i})$ and $R:=\sum_{i=1}^k r_i$. When $R=n$, the  exact weights are $\wbar_i=r_i$, which the algorithm therefore returns immediately. Otherwise, it initializes $w_i^{(0)}:=\frac{nr_i}{R}$ and performs $N=\left\lceil p\log\left(\frac{8p\sqrt{R}}{\eps}\right)\right\rceil$ iterations, each of whose key step is the following, which we refer to as the ``Lewis update'':\[ w_i^{(t+1)}:=\blevsc{i}\left(\left(\bW^{(t)}\right)^{1/2-1/p}\bA\right), \text{ for all } i \in [k] \numberthis\label{eq:Lewis-update}.\]After all $N$ iterations, the algorithm returns $\what_i
:=
\left(
\frac{w_i^{(N)}}{(w_i^{(N-1)})^{1-2/p}}
\right)^{p/2}$,
for all $i\in[k].$ 

In words, \Cref{eq:Lewis-update} rescales the rows of $\bA$ according to the current block weights $\bw^{(t)}$, computes the exact row leverage scores of 
the rescaled matrix $\left(\bW^{(t)}\right)^{1/2-1/p}\bA$, sums them within each block, and replaces each block weight for the next iteration by the corresponding block leverage score. 
\Cref{eq:Lewis-update} is motivated by the defining fixed-point relation $\wbar_i=\blevsc{i}(\bWbar^{1/2-1/p}\bA)$ for every $i\in[k]$ for the true block Lewis weights $\bwbar$.
The same relation also motivates the final postprocessing:   since
$\wbar_i^{2/p}=\frac{\wbar_i}{\wbar_i^{1-2/p}}$,
the algorithm applies this same algebraic inversion to the final pair of iterates.

Our analysis centers on this fixed-point structure. Its key step is to show that one Lewis update contracts a KL divergence to $\bwbar$ by the factor $1-\frac{2}{p}$ (\Cref{sec:kl-decay}). Iterating this update brings $\bG(\bw)$ sufficiently close spectrally  to $\bG(\bwbar)$, and combining this spectral closeness with the postprocessing above yields  coordinatewise $(1\pm\eps)$-approximation to $\bwbar$ (\Cref{sec:coordinate-wise-control}).

\begin{algorithm}[t]
\caption{Fixed-point iteration}
\label{alg:main}
\begin{algorithmic}[1]
\Require A full-column-rank matrix $\bA\in\R^{m\times n}$  partitioned into $k$ row blocks $\bAblk{1}, \bAblk{2}, \dotsc, \bAblk{k}$ with $\bAblk{i}\in\R^{m_i\times n}$ for each $i\in[k]$, problem parameter $p>2$, and accuracy parameter $\eps\in(0, 1)$. 
\State Compute $r_i:=\rank(\bAblk{i})$ for all $i\in [k]$, and define $R:= \sum_{i = 1}^k r_i$. 
\If{$R=n$}
  \State \Return $\what_i=r_i$ for all $i\in[k]$. 
\EndIf
\State Initialize $w_i^{(0)}:= \frac{nr_i}{R}$ for all $i\in[k]$, and set $N = \left\lceil p \log\left(\frac{8p\sqrt{R}}{\eps}\right)\right\rceil$.
\For{$t=0,1,\ldots,N-1$}
  \State Set $\bW^{(t)}:=\Diag\left(w_1^{(t)}\bI_{m_1}, w_2^{(t)}\bI_{m_2}, \dotsc, w_k^{(t)}\bI_{m_k}\right)$.
  \State Compute $\levsc\left((\bW^{(t)})^{1/2-1/p}\bA\right)\in\R^m_{\geq0}$, the  exact row-wise leverage scores of  $(\bW^{(t)})^{1/2-1/p}\bA$. 
  \For{block $i=1,\ldots,k$}
    \State Set $w_i^{(t+1)}:= \blevsc{i}\left( (\bW^{(t)})^{1/2-1/p}\bA \right) $.
  \EndFor
\EndFor
\State \Return $\what_i:=\left(\frac{w_i^{(N)}}{(w_i^{(N-1)})^{1-2/p}}\right)^{p/2}$ for all $i\in [k]$.
\end{algorithmic}
\end{algorithm}

\paragraph{Notation.}
For the analysis in this section and in \Cref{sec:intuition}, we use the following notation. We set $\omrp:=1-\frac{2}{p}\in(0,1)$ and define  $\Omega:=\{\bu\in\Rkpos:\one^\top\bu=n\}$, the set block weights live in.

More generally, for positive vectors $\bx,\by\in\R_{>0}^d$ of the same dimension satisfying $\one^\top\bx=\one^\top\by=n$, we define the scaled Kullback-Leibler (KL) divergence
\[
\kldiv{\bx}{\by}
:=
\sum_{i=1}^d x_i\log\left(\frac{x_i}{y_i}\right).
\numberthis\label{eq:notation-KLdiv}
\]
For probability distributions $\mu$ and $\nu$, we use $\KL{\mu}{\nu}$ for the usual KL divergence.

Given $\bw\in\Omega$ and a specific block partitioning, let $\bW$ denote its block weight matrix as defined in \Cref{sec:intro-notationPrelims}, and define  $\bG(\bw):=\bA^\top\bW^\omrp\bA$. Since $\bA$ has full column rank, $\bG(\bw)\succ\bzero$.

We use $\calT:\R_{>0}^k\to\Rkpos$ as shorthand for the main update in \Cref{alg:main}, namely
\[
\calT_i(\bw)
:=
\blevsc{i}\left(\bW^{\omrp/2}\bA\right)
=
w_i^\omrp
\Tr\left(
\bAblk{i}
\left(\bA^\top\bW^\omrp\bA\right)^{-1}
\bAblk{i}^\top
\right) = w_i^\omrp \Tr\left(\bAblk{i}\bG(\bw)^{-1}\bAblk{i}^\top\right)
\text{ for all }i\in[k].
\]
Thus, \Cref{eq:Lewis-update} can be written as $\bw^{(t+1)}=\calT(\bw^{(t)})$, a formulation we often use. By \Cref{fact:block-levsc} and the assumption that every block is nonzero, $\calT(\Omega)\subseteq\Omega$. Together with $\bw^{(0)}\in\Omega$, this implies $\bw^{(t)}\in\Omega$ for every iterate of \Cref{alg:main}. The true block Lewis weights likewise satisfy $\bwbar=\calT(\bwbar)\in\Omega$ by \Cref{def:blockLW} and \Cref{fact:block-levsc}.
When we specialize to row-wise Lewis weights in \Cref{sec:intuition}, we overload $\Omega$ with $\{\bu\in\Rmspos:\one^\top\bu=n\}$ and $\calT$ for the row-wise Lewis map $\calT_i(\bw):=\levsc{i}\left(\bW^{\omrp/2}\bA\right)$.

\subsection{The block Lewis update contracts KL divergence}
\label{sec:kl-decay}

\begin{lemma}
\label{lem:kl-contraction}
Let  $\bw\in\Omega$. Then  $\kldiv{\bwbar}{\calT(\bw)}
    \leq
    \omrp\kldiv{\bwbar}{\bw}.$
\end{lemma}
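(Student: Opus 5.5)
The plan is to split the log-ratio in $\kldiv{\bwbar}{\calT(\bw)}$ into a ``diagonal rescaling'' part, which produces exactly $\omrp\kldiv{\bwbar}{\bw}$, and a ``Gram-matrix'' part, which I will show is nonpositive using two concavity arguments. Write $b_i(\bw):=\Tr\left(\bAblk{i}\bG(\bw)^{-1}\bAblk{i}^\top\right)$, so that $\calT_i(\bw)=w_i^\omrp b_i(\bw)$. The fixed-point property gives $\wbar_i=\wbar_i^\omrp b_i(\bwbar)$. Then
\[
\log\frac{\wbar_i}{\calT_i(\bw)}=\omrp\log\frac{\wbar_i}{w_i}+\log\frac{b_i(\bwbar)}{b_i(\bw)} .
\]
Summing against $\wbar_i$ gives $\kldiv{\bwbar}{\calT(\bw)}=\omrp\kldiv{\bwbar}{\bw}-\sum_i\wbar_i\log\frac{b_i(\bw)}{b_i(\bwbar)}$. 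It therefore suffices to show $\sum_i\wbar_i\log\frac{b_i(\bw)}{b_i(\bwbar)}\ge 0$.

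\emph{Step 1 (lower bound by a log-determinant).} Set $\bM:=\bG(\bwbar)^{1/2}\bG(\bw)^{-1}\bG(\bwbar)^{1/2}\succ\bzero$ and $\bB_i:=\wbar_i^{\omrp/2}\bAblk{i}\bG(\bwbar)^{-1/2}$. Then $\sum_i\bB_i^\top\bB_i=\bI$, $\Tr(\bB_i\bB_i^\top)=\wbar_i$, and $\wbar_i^\omrp b_i(\bw)=\Tr(\bB_i\bM\bB_i^\top)$. Hence
\[
\frac{b_i(\bw)}{b_i(\bwbar)}=\Tr(\bm\rho_i\bM),\qquad \bm\rho_i:=\frac{\bB_i^\top\bB_i}{\wbar_i},
\]
and each $\bm\rho_i$ is a density matrix (PSD with unit trace). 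Concavity of $\log$, applied in the eigenbasis of $\bM$ (a Jensen/Peierls-type inequality), gives $\log\Tr(\bm\rho_i\bM)\ge\Tr(\bm\rho_i\log\bM)$. Multiplying by $\wbar_i$ and summing yields
\[
\sum_i\wbar_i\log\frac{b_i(\bw)}{b_i(\bwbar)}\ge\Tr(\log\bM)=\log\det\bG(\bwbar)-\log\det\bG(\bw).
\]

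\emph{Step 2 (the log-determinant is maximized at $\bwbar$ on $\Omega$).} It remains to show $\log\det\bG(\bw)\le\log\det\bG(\bwbar)$. Concavity of $\log\det$ gives
\[
\log\det\bG(\bw)\le\log\det\bG(\bwbar)+\Tr\left(\bG(\bwbar)^{-1}(\bG(\bw)-\bG(\bwbar))\right).
\]
The trace term equals $\sum_i w_i^\omrp b_i(\bwbar)-n=\sum_i\wbar_i(w_i/\wbar_i)^\omrp-n$. Since $x\mapsto x^\omrp$ is concave for $\omrp\in(0,1)$, Jensen with weights $\wbar_i/n$ gives $\sum_i\wbar_i(w_i/\wbar_i)^\omrp\le n\left(\frac1n\sum_i w_i\right)^\omrp=n$, using $\one^\top\bw=n$. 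So the trace term is $\le0$. Combining with Step 1 gives the desired nonnegativity and proves the lemma.

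I expect Step 1 to be the main obstacle. The naive scalar Jensen bound on $\sum_i\wbar_i\log(\cdot)$ points the wrong way, so the inequality has to be routed through the matrix $\bM$ and the operator form of concavity of $\log$. The point of this routing is that $\sum_i\bB_i^\top\bB_i=\bI$ makes the weighted sum telescope into $\Tr\log\bM$. I would verify the density-matrix inequality $\log\Tr(\bm\rho\bM)\ge\Tr(\bm\rho\log\bM)$ by diagonalizing $\bM=\sum_j\mu_j\bu_j\bu_j^\top$ and applying scalar Jensen with the probability weights $\bu_j^\top\bm\rho\bu_j$.
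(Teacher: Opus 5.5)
Your proof is correct, and it takes a genuinely different route from the paper's.

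\textbf{The paper's route.} The paper isolates the same Gram-matrix term. It bounds $\sum_i\wbar_i\log\bigl(b_i(\bwbar)/b_i(\bw)\bigr)$ from above by scalar Jensen applied to the reciprocal ratios. That direction does work, contrary to your closing remark about scalar Jensen pointing the wrong way. This reduces the claim to $\sum_i\wbar_i\,b_i(\bwbar)/b_i(\bw)\le n$. The paper gets that from the Frobenius Cauchy--Schwarz bound $b_i(\bwbar)^2\le b_i(\bw)\Tr\bigl(\bAblk{i}\bG(\bwbar)^{-1}\bG(\bw)\bG(\bwbar)^{-1}\bAblk{i}^\top\bigr)$. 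After multiplying by $\wbar_i^{\omrp}$ and summing, this gives $\Tr(\bG(\bw)\bG(\bwbar)^{-1})$, which is then bounded by $n$ with the same H\"older step as your Step 2.

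\textbf{Your route.} Your density-matrix Jensen lands on a log-determinant instead, so you actually prove
\[ \kldiv{\bwbar}{\calT(\bw)}\le\omrp\kldiv{\bwbar}{\bw}+\log\det\bG(\bw)-\log\det\bG(\bwbar). \]
The right-hand side is exactly $\KL{\pi_{\bwbar^\omrp}}{\pi_{\bw^\omrp}}$ from \Cref{eq:EI-con-3}. By AM--GM on the eigenvalues of $\bG(\bwbar)^{-1/2}\bG(\bw)\bG(\bwbar)^{-1/2}$, this bound is at least as tight as the paper's intermediate bound $\omrp\kldiv{\bwbar}{\bw}+n\log\bigl(\tfrac1n\Tr(\bG(\bw)\bG(\bwbar)^{-1})\bigr)$.

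\textbf{How the two relate.} Your Step 1 is a direct matrix proof, valid for blocks and not only rows, of the instance of the entropic-independence inequality \Cref{eq:EI-con-2} used in \Cref{sec:global-intuition}. It needs no volume sampling. Your Step 2 is a self-contained proof of the variational characterization $\bwbar\in\arg\max_{\bu\in\Omega}\log\det\bG(\bu)$, which that section only cites. So your argument merges the paper's two proofs. It also gives an extra additive gain equal to the log-det suboptimality of $\bw$.

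\textbf{What each buys.} The paper's route is more elementary: it uses only Cauchy--Schwarz, scalar Jensen and H\"older, with no matrix logarithm and no log-det concavity. Yours yields the sharper inequality and makes the link to the volume-sampling picture explicit.
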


\begin{proof}
By  the definition of block leverage scores and $\bwbar=\calT(\bwbar)$, we have \[ \calT_i(\bw)=w_i^\omrp \Tr\left(\bAblk{i}\bG(\bw)^{-1}\bAblk{i}^\top\right), \qquad \wbar_i = \wbar_i^{\omrp} \Tr\left(\bAblk{i}\bG(\bwbar)^{-1}\bAblk{i}^\top\right).\] Therefore, by expanding out the definition in \Cref{eq:notation-KLdiv}, we have \[\kldiv{\bwbar}{\calT(\bw)} = \omrp \kldiv{\bwbar}{\bw} + \sum_{i = 1}^k \wbar_i \log\left(\frac{\Tr\left(\bAblk{i}\bG(\bwbar)^{-1}\bAblk{i}^\top\right)}{\Tr\left(\bAblk{i}\bG(\bw)^{-1}\bAblk{i}^\top\right)}\right). \numberthis\label{eq:KLcon-2}\] We  complete the  proof  by showing next that the second term of \Cref{eq:KLcon-2} is nonpositive. To this end, we first note that by Jensen's inequality applied to $x\mapsto\log(x)$ and the fact that $\sum_{i=1}^k \wbar_i = n$, we have  \[ \sum_{i = 1}^k \frac{\wbar_i}{n} \log\left(\frac{\Tr\left(\bAblk{i}\bG(\bwbar)^{-1}\bAblk{i}^\top\right)}{\Tr\left(\bAblk{i}\bG(\bw)^{-1}\bAblk{i}^\top\right)}\right) \leq \log\left( \sum_{i=1}^k \frac{\wbar_i}{n} \frac{\Tr\left(\bAblk{i}\bG(\bwbar)^{-1}\bAblk{i}^\top\right)}{\Tr\left(\bAblk{i}\bG(\bw)^{-1}\bAblk{i}^\top\right)} \right),\] and it  suffices to show that \[\sum_{i=1}^k {\wbar_i} \frac{\Tr\left(\bAblk{i}\bG(\bwbar)^{-1}\bAblk{i}^\top\right)}{\Tr\left(\bAblk{i}\bG(\bw)^{-1}\bAblk{i}^\top\right)} \stackrel{?}{\leq} n.\numberthis\label{eq:KLcon-3}\]To this end, we use $\bwbar=\calT(\bwbar)$ and apply matrix Cauchy-Schwarz inequality $\left|\Tr\left(\bX^\top\bY\right)\right|\leq \|\bX\|_{\mathrm{F}}\|\bY\|_{\mathrm{F}}$ with $\bX:=\bG(\bw)^{1/2}\bG(\bwbar)^{-1}\bAblk{i}^\top$ and $\bY:= \bG(\bw)^{-1/2}\bAblk{i}^\top$ and first obtain \[\wbar_i \frac{\Tr\left(\bAblk{i} \bG(\bwbar)^{-1}\bAblk{i}^\top\right)}{\Tr\left(\bAblk{i} \bG(\bw)^{-1}\bAblk{i}^\top\right)} = \wbar_i^{\omrp} \frac{\left[\Tr \left(\bAblk{i}\bG(\bwbar)^{-1}\bAblk{i}^\top\right)\right]^2}{\Tr\left(\bAblk{i}\bG(\bw)^{-1}\bAblk{i}^\top\right)} \leq \wbar_i^{\omrp} \Tr\left(\bAblk{i} \bG(\bwbar)^{-1}\bG(\bw)  \bG(\bwbar)^{-1} \bAblk{i}^\top\right).\] Summing the above inequality over $i = 1, 2, \dotsc, k$ and applying the definition of $\bG(\bwbar)$ and $\bG(\bw)$, linearity of trace, and the invariance of trace to cyclic permutation, we have
\begin{align*}
\sum_{i=1}^k \wbar_i\frac{\Tr\left(\bAblk{i}\bG(\bwbar)^{-1}\bAblk{i}^\top\right)}{\Tr\left(\bAblk{i}\bG(\bw)^{-1}\bAblk{i}^\top\right)} &\leq \sum_{i=1}^k \wbar_i^\omrp \Tr\left(\bAblk{i}\bG(\bwbar)^{-1}\bG(\bw) \bG(\bwbar)^{-1} \bAblk{i}^\top\right)= \Tr\left(\bG(\bw)\bG(\bwbar)^{-1}\right).\numberthis\label{eq:KLcon-4}
\end{align*}
Next, again by $\bwbar=\calT(\bwbar)$, H\"older's inequality with exponents $\frac{1}{\omrp}$ and $\frac{1}{1-\omrp}$ (valid since we restrict ourselves to $p>2$), and the fact that $\bw\in\Omega$, we have \[ \Tr\left(\bG(\bw)\bG(\bwbar)^{-1}\right)
= \sum_{i=1}^k w_i^\omrp \Tr\left( \bAblk{i}\bG(\bwbar)^{-1} \bAblk{i}^\top\right) = \sum_{i=1}^k w_i^\omrp \wbar_i^{1-\omrp}\leq \left(\sum_{i=1}^k w_i\right)^\omrp \left(\sum_{i=1}^k \wbar_i\right)^{1-\omrp} = n. \numberthis\label{eq:KLcon-5}\]  Combining \Cref{eq:KLcon-4} and \Cref{eq:KLcon-5} completes the proof of \Cref{eq:KLcon-3}, which finishes the proof of the claim. 
\end{proof}

\begin{corollary}\label{cor:kl-decay}Assume $R>n$. 
Then for every $t\geq0$, the iterates  of \Cref{alg:main} satisfy \[\kldiv{\bwbar}{\bw^{(t)}}\leq\omrp^t n \log\left(\frac{R}{n}\right).\]
\end{corollary}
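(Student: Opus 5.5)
The plan is to iterate \Cref{lem:kl-contraction} and then bound the initial divergence $\kldiv{\bwbar}{\bw^{(0)}}$. Since $\bw^{(0)}\in\Omega$ and $\calT(\Omega)\subseteq\Omega$, every iterate $\bw^{(t)}$ lies in $\Omega$. So \Cref{lem:kl-contraction} applies at each step and gives $\kldiv{\bwbar}{\bw^{(t+1)}}=\kldiv{\bwbar}{\calT(\bw^{(t)})}\leq\omrp\kldiv{\bwbar}{\bw^{(t)}}$. A trivial induction on $t$ then yields $\kldiv{\bwbar}{\bw^{(t)}}\leq\omrp^t\kldiv{\bwbar}{\bw^{(0)}}$. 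It remains to show
\[
\kldiv{\bwbar}{\bw^{(0)}}\leq n\log\left(\frac{R}{n}\right).
\]

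For the initial bound, substitute $w_i^{(0)}=nr_i/R$:
\[
\kldiv{\bwbar}{\bw^{(0)}}=\sum_{i=1}^k\wbar_i\log\left(\frac{\wbar_i R}{n r_i}\right)=n\log\left(\frac{R}{n}\right)+\sum_{i=1}^k\wbar_i\log\left(\frac{\wbar_i}{r_i}\right),
\]
where we used $\sum_i\wbar_i=n$. The bound then reduces to showing $\wbar_i\leq r_i$ for every $i$, since then each term $\wbar_i\log(\wbar_i/r_i)$ is nonpositive. This is exactly the first item of \Cref{fact:block-levsc}. Indeed, $\wbar_i=\blevsc{i}(\bWbar^{\omrp/2}\bA)$, and the $i^\mathrm{th}$ block of $\bWbar^{\omrp/2}\bA$ is $\wbar_i^{\omrp/2}\bAblk{i}$, whose rank is $r_i$ because $\wbar_i>0$. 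Hence $\wbar_i\leq r_i$.

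There is no real obstacle here. The only points needing care are the following:
\begin{itemize}
\item The rescaling by the positive scalar $\wbar_i^{\omrp/2}$ preserves the block rank.
\item Each $r_i\geq1$, since blocks are nonzero, so $\bw^{(0)}\in\Omega$ is well defined and strictly positive.
\item The assumption $R>n$ only ensures the algorithm actually enters the loop. The bound itself would also hold, and be consistent, when $R=n$.
\end{itemize}
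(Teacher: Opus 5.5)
Your proposal is correct and follows essentially the same route as the paper: it bounds $\kldiv{\bwbar}{\bw^{(0)}}\le n\log(R/n)$ using $\wbar_i\le r_i$ and $\sum_i\wbar_i=n$ from \Cref{fact:block-levsc}, then iterates \Cref{lem:kl-contraction} along $\bw^{(t)}\in\Omega$. Your explicit check that rescaling block $i$ by the positive scalar $\wbar_i^{\omrp/2}$ preserves its rank is a detail the paper leaves implicit.
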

\begin{proof}
The assumption $R>n$ implies \Cref{alg:main} does not terminate at Line 3 and instead initializes $w_i^{(0)} = \frac{nr_i}{R}$. We also know from \Cref{fact:block-levsc} that $0\leq \wbar_i \leq r_i$ and $\sum_{i=1}^k \wbar_i = n$. Therefore, \[\kldiv{\bwbar}{\bw^{(0)}}= \sum_{i=1}^k \wbar_i\log\left(\frac{\wbar_i}{w^{(0)}_i}\right)\leq n\log\left(\frac{R}{n}\right). \numberthis\label{eq:cor-con-1}\]Applying \Cref{lem:kl-contraction} at $\bw = \bw^{(s)}$  gives $\kldiv{\bwbar}{\bw^{(s+1)}}= \kldiv{\bwbar}{\calT(\bw^{(s)})}\leq \omrp\kldiv{\bwbar}{\bw^{(s)}}$; repeatedly applying this at $s = 0, 1, 2, \dotsc, t-1$ and plugging in \Cref{eq:cor-con-1}  gives the claimed decay. 
\end{proof} 

\subsection{From KL contraction to a spectral bound}
\label{sec:coordinate-wise-control}
 For two compatible matrices $\bX, \bY\succ\bzero$, we define the Bregman divergence~\cite{bregman1967relaxation} \[\logdetdiv{\bX}{\bY}:= -\log\det\bX+\log\det\bY + \Tr\left(\bY^{-1}\left(\bX-\bY\right)\right).\] 
\begin{lemma}\label{lem:normalized-gram-eigs-small}
Let $\bw\in \Omega$. Then, 
 we have $\logdetdiv{\bG(\bw)}{\bG(\bwbar)} \leq \omrp\kldiv{\bwbar}{\bw}$. 
\end{lemma}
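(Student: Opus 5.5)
The plan is to split the Bregman divergence into a log-determinant part and a trace part. The trace part is handled exactly as in the proof of \Cref{lem:kl-contraction}. The log-determinant part reduces to a concavity-type inequality for $\log\det$ in logarithmic coordinates. Writing out the definition,
\[
\logdetdiv{\bG(\bw)}{\bG(\bwbar)} = -\log\det\left(\bG(\bwbar)^{-1/2}\bG(\bw)\bG(\bwbar)^{-1/2}\right) + \Tr\left(\bG(\bwbar)^{-1}\bG(\bw)\right) - n .
\]
By the computation in \Cref{eq:KLcon-5}, which uses $\bwbar=\calT(\bwbar)$, H\"older, and $\bw\in\Omega$, we have $\Tr(\bG(\bwbar)^{-1}\bG(\bw))=\sum_i w_i^\omrp\wbar_i^{1-\omrp}\le n$. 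So the last two terms together are nonpositive, and it suffices to show
\[
\log\det\left(\bG(\bwbar)^{-1/2}\bG(\bw)\bG(\bwbar)^{-1/2}\right)\ \ge\ \omrp\sum_{i=1}^k \wbar_i\log\frac{w_i}{\wbar_i}.
\]

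For the log-determinant, set $c_i:=(w_i/\wbar_i)^\omrp>0$ and
\[
\bM_i:=\wbar_i^\omrp\,\bG(\bwbar)^{-1/2}\bAblk{i}^\top\bAblk{i}\bG(\bwbar)^{-1/2}\succeq\bzero .
\]
Then $\bG(\bwbar)^{-1/2}\bG(\bw)\bG(\bwbar)^{-1/2}=\sum_i c_i\bM_i$ and $\sum_i\bM_i=\bI$. Moreover $\Tr(\bM_i)=\blevsc{i}(\bWbar^{\omrp/2}\bA)=\wbar_i$ by the fixed-point property. The needed claim is therefore the inequality
\[
\log\det\Big(\sum_i c_i\bM_i\Big)\ \ge\ \sum_i \Tr(\bM_i)\log c_i \qquad\text{whenever } \sum_i\bM_i=\bI,\ \bM_i\succeq\bzero .
\]
Substituting $\log c_i=\omrp\log(w_i/\wbar_i)$ and $\Tr(\bM_i)=\wbar_i$ gives exactly the displayed bound, i.e.\ $-\log\det(\cdot)\le\omrp\kldiv{\bwbar}{\bw}$.

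I expect this log-determinant inequality to be the main step, and I would prove it by Cauchy--Binet and volume sampling. First decompose each $\bM_i$ into rank-one pieces $\sum_{j\in I_i}\bv_j\bv_j^\top$, where the $\bv_j$ are the rescaled rows; these form an $m\times n$ matrix $\bV$ with $\bV^\top\bV=\bI$. Cauchy--Binet then gives
\[
\det\Big(\sum_j c_{b(j)}\bv_j\bv_j^\top\Big)=\sum_{|S|=n}\det(\bV_S)^2\prod_{j\in S}c_{b(j)},
\]
where $b(j)$ is the block containing row $j$. The weights $\det(\bV_S)^2$ sum to $\det(\bV^\top\bV)=1$, so they define the volume-sampling distribution on $n$-subsets. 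Applying Jensen to $\log$ bounds the log-determinant below by
\[
\mathbb{E}_S\sum_{j\in S}\log c_{b(j)}=\sum_j \Pr[j\in S]\log c_{b(j)}.
\]
Finally, the marginals of volume sampling are the leverage scores, $\Pr[j\in S]=\|\bv_j\|^2$. Summing within blocks gives $\sum_i\Tr(\bM_i)\log c_i$, completing the argument.

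If one wants to avoid volume sampling, an alternative is to prove the inequality for $\bM_i$ in general position via concavity of $\bc\mapsto\log\det\sum_i e^{s_i}\bM_i$ along rays. The combinatorial route, however, is cleaner and matches the entropic-independence viewpoint the paper advertises.
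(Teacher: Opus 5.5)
Your proposal is correct and is essentially the paper's proof. You use the same split into the trace term, bounded by $n$ via \Cref{eq:KLcon-5}, and the log-determinant term. Your Jensen step over the Cauchy--Binet (volume-sampling) expansion is exactly the tangent-line inequality at $\bx=\log\bwbar$ for the log-sum-exp function $\bx\mapsto\log\det\bG(e^{\bx})$ in \Cref{lem:logdetGexp-is-cvx}, with the volume-sampling marginals supplying the gradient $\omrp\calT(\bwbar)=\omrp\bwbar$. The only slip is in your closing aside: it needs convexity, not concavity, of $\mathbf{s}\mapsto\log\det\sum_i e^{s_i}\mathbf{M}_i$, and with that fix it is literally the paper's argument.
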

\begin{proof}
Our proof strategy is as follows: We will show that \[\Tr\left(\bG(\bw)\bG(\bwbar)^{-1}\right)\stackrel{?}{\leq} n, \qquad \log\det(\bG(\bwbar)) - \log\det(\bG(\bw)) \stackrel{?}{\leq}\omrp\kldiv{\bwbar}{\bw}.\numberthis\label{eq:spectral-1}\] Using these bounds in the definition of $\logdetdiv{\bG(\bw)}{\bG(\bwbar)}$ gives the below chain of steps: \begin{align*}\logdetdiv{\bG(\bw)}{\bG(\bwbar)} &= \Tr\left(\bG(\bw)\bG(\bwbar)^{-1}\right) - n + \log\det(\bG(\bwbar)) - \log\det(\bG(\bw))\\ & \leq \omrp\kldiv{\bwbar}{\bw}.\end{align*} which  concludes the proof. We now show \Cref{eq:spectral-1}. We already have  $\Tr\left(\bG(\bw)\bG(\bwbar)^{-1}\right)\leq n$  from \Cref{eq:KLcon-5}. To prove the second inequality in \Cref{eq:spectral-1}, we start with the fact that $\bx\mapsto\log\det\bG(e^{\bx})$ is convex (cf. \Cref{lem:logdetGexp-is-cvx}). Then, by the linear underestimator property of convex functions and the identities $\frac{\partial}{\partial x_i}\log\det\bG\left(e^{\bx}\right) = \omrp \calT_{i}\left(e^{\bx}\right)$ (cf. \Cref{lem:logdetGexp-is-cvx}) and $\bwbar=\calT(\bwbar)$,  we have \begin{align*}
\log\det\bG(\bw) &= \log\det\bG\left(e^{\log \bw}\right)\\ &\geq \log\det\bG\left(e^{\log \bwbar}\right) + \nabla_{\bx}\log\det\bG\left(e^{\bx}\right)|_{\bx=\log\bwbar}^\top (\log \bw-\log\bwbar)\\  
&= \log\det\bG(\bwbar) + \omrp\cdot \sum_{i = 1}^k \wbar_i \log\left(\frac{w_i}{\wbar_i}\right) = \log\det\bG(\bwbar) - \omrp \kldiv{\bwbar}{\bw}. 
\end{align*} Rearranging the terms  proves the second inequality in \Cref{eq:spectral-1}, thereby proving the lemma. 
\end{proof}

\begin{lemma}
\label{lem:coordinatewise-approx}
Let $\bw\in\Omega$ satisfy $\logdetdiv{\bG(\bw)}{\bG(\bwbar)}\leq \frac{\eps^2}{64p^2}$. Define $\what_i := \left[\Tr\left(\bAblk{i}\bG(\bw)^{-1}\bAblk{i}^\top\right)\right]^{\frac{p}{2}}$ for all $i \in [k]$. Then, $(1-\eps)\wbar_i \leq \what_i \leq (1+\eps)\wbar_i$ for all $i \in [k]$.
\end{lemma}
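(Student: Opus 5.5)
Since $\wbar_i=\left[\Tr\left(\bAblk{i}\bG(\bwbar)^{-1}\bAblk{i}^\top\right)\right]^{p/2}$ (from $\wbar_i=\wbar_i^{\omrp}\Tr(\bAblk{i}\bG(\bwbar)^{-1}\bAblk{i}^\top)$, i.e.\ $\wbar_i^{2/p}=\Tr(\cdots)$), the plan is to turn the small Bregman divergence into a spectral sandwich $(1-\delta)\bG(\bwbar)\preceq\bG(\bw)\preceq(1+\delta)\bG(\bwbar)$. We then invert the sandwich and raise the resulting trace ratio to the power $p/2$.

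\textbf{Step 1 (Bregman to eigenvalues).} Let $\lambda_1,\dots,\lambda_n>0$ be the eigenvalues of $\bM:=\bG(\bwbar)^{-1/2}\bG(\bw)\bG(\bwbar)^{-1/2}$. Then
\[
\logdetdiv{\bG(\bw)}{\bG(\bwbar)}=\sum_{j=1}^n\phi(\lambda_j),\qquad \phi(\lambda):=\lambda-1-\log\lambda\ge0.
\]
So each $\phi(\lambda_j)\le\eta:=\frac{\eps^2}{64p^2}$. The elementary bound $\phi(\lambda)\ge\frac{(\lambda-1)^2}{2\max(1,\lambda)}$ handles both sides; alternatively, one can argue directly that $\phi(1\pm\delta)>\eta$ for $\delta=4\sqrt{\eta}$. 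This gives $|\lambda_j-1|\le\delta$ with $\delta$ of order $\sqrt{\eta}=\frac{\eps}{8p}$, say $\delta\le\frac{\eps}{2p}\le\frac14$. Hence $(1-\delta)\bG(\bwbar)\preceq\bG(\bw)\preceq(1+\delta)\bG(\bwbar)$, and inverting gives $(1+\delta)^{-1}\bG(\bwbar)^{-1}\preceq\bG(\bw)^{-1}\preceq(1-\delta)^{-1}\bG(\bwbar)^{-1}$.

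\textbf{Step 2 (traces).} Conjugating by $\bAblk{i}$ and taking traces, which is monotone under $\preceq$, gives
\[
\frac{1}{1+\delta}\,\wbar_i^{2/p}\le\Tr\left(\bAblk{i}\bG(\bw)^{-1}\bAblk{i}^\top\right)\le\frac{1}{1-\delta}\,\wbar_i^{2/p}.
\]
Raising to the power $p/2$ yields $(1+\delta)^{-p/2}\wbar_i\le\what_i\le(1-\delta)^{-p/2}\wbar_i$.

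\textbf{Step 3 (the exponent).} This is the main obstacle, because the factor $p/2$ amplifies the error; it is why $\eta$ must scale like $\eps^2/p^2$. With $\delta\le\frac{\eps}{2p}$ we get $(1+\delta)^{-p/2}\ge1-\frac{p\delta}{2}\ge1-\eps$ by Bernoulli. For the upper side, $(1-\delta)^{-p/2}\le\exp\left(\frac{p}{2}\cdot\frac{\delta}{1-\delta}\right)\le\exp(\eps/3)$ when $\delta\le\frac{\eps}{2p}$ and $\delta\le\frac14$. Finally $\exp(\eps/3)\le1+\eps$ for $\eps\in(0,1)$. The only real care is the constant bookkeeping in Step 1, namely checking that $\eta=\eps^2/(64p^2)$ indeed forces $\delta\le\eps/(2p)$. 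This follows from $\phi(1\pm\delta)\ge\delta^2/4$ for $\delta\le\frac12$, which gives $\delta\le2\sqrt{\eta}=\frac{\eps}{4p}$, comfortably enough.
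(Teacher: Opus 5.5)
Your proposal is correct and follows the paper's proof. In both, one obtains $(1-a)\bG(\bwbar)\preceq\bG(\bw)\preceq(1+a)\bG(\bwbar)$ with $a=\eps/(4p)$, inverts, conjugates by $\bAblk{i}$ and takes traces using $\Tr(\bAblk{i}\bG(\bwbar)^{-1}\bAblk{i}^\top)=\wbar_i^{2/p}$, and then controls $(1+a)^{-p/2}$ and $(1-a)^{-p/2}$. The only difference is that the paper imports the Bregman-to-spectral step from Lemma~20 of Gribling, Sidford, and Zhang, whereas your Step~1 proves it directly from $\sum_j(\lambda_j-1-\log\lambda_j)$ with the same constant, which makes your argument self-contained.
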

\begin{proof}
Set $\delta := \frac{\eps^2}{64p^2}$ and $a:= 2\sqrt{\delta}$. Then, under the premise of our lemma (and using our standing assumption that $p>2$), \cite[Lemma~20]{pmlr-v336-gribling26a} yields: \[(1-a)\bG(\bwbar)\preceq \bG(\bw)\preceq (1+a) \bG(\bwbar).\] Taking inverses, pre- and post-multiplying  by appropriate matrices, and taking the trace yields: \[\frac{1}{1+a}\Tr\left(\bAblk{i}\bG(\bwbar)^{-1}\bAblk{i}^\top\right)\leq \Tr\left(\bAblk{i}\bG(\bw)^{-1}\bAblk{i}^\top\right)\leq \frac{1}{1-a}\Tr\left(\bAblk{i}\bG(\bw)^{-1}\bAblk{i}^\top\right).\numberthis\label{eq:coord1}\] Since $\bwbar= \calT(\bwbar)$, we have $\Tr\left(\bAblk{i}\bG(\bwbar)^{-1}\bAblk{i}^\top\right) = \wbar_i^{2/p}.$ Substituting this back into \Cref{eq:coord1}, exponentiating by $p/2$, and plugging in the definitions of $\bwbar$ and $\widehat{\bw}$ gives
\[(1+a)^{-p/2}\wbar_i\leq \what_i \leq (1-a)^{-p/2}\wbar_i.\numberthis\label{eq:coord2}\]
We simplify both sides of the above inequality chain, one at a time. First, we have $(1+a)^{-p/2} \geq e^{-ap/2} = e^{-\varepsilon/8}\geq 1-\varepsilon/8\geq 1-\varepsilon.$ Since $a<1/2$, we have $-\log(1-a)\leq 2a$, and therefore $(1-a)^{-p/2} \leq e^{ap} =e^{\eps/4} \leq 1+\eps$.  These two simplifications, plugged into \Cref{eq:coord2} yield the claim.  
\end{proof}

\subsection{Proof of \Cref{thm:blkLW}}\label{sec:main-proof}
\begin{proof}[Proof of \Cref{thm:blkLW}]
First, we note that by subadditivity of rank under vertical concatenation, it is always  the case that $R:= \sum_{i=1}^k \rank\left(\bAblk{i}\right)\geq \rank(\bA)=: n$. 
When $R = n$, one can infer from \Cref{fact:block-levsc} that each $\blevsc{i}=\rank\left(\bAblk{i}\right)$ for all $i\in[k]$, and, as designed in \Cref{alg:main}, the block Lewis weights are  $\what_i=\rank\left(\bAblk{i}\right)$ for all $i\in[k]$, thereby proving correctness in this case. The claimed iteration bound is immediate in this case, since \Cref{alg:main} computes no leverage scores.  
We now assume the other case, $R>n$.
By \Cref{lem:normalized-gram-eigs-small} applied to $\bw= \bw^{(N-1)}$, \Cref{cor:kl-decay} applied to $t=N-1$, the choice of $\omrp=1-\frac{2}{p}$, and the choice of $N = \left\lceil p\log\left(\frac{8p\sqrt{R}}{\eps}\right)\right\rceil$ in \Cref{alg:main}, we have 
\[ \logdetdiv{\bG(\bw^{(N-1)})}{\bG(\bwbar)}\leq \omrp \kldiv{\bwbar}{\bw^{(N-1)}}\leq \gamma^N n \log\left(\frac{R}{n}\right)\leq R e^{-2N/p}\leq \frac{\eps^2}{64p^2}.\]\looseness=-1 Thus, the premise of \Cref{lem:coordinatewise-approx} is satisfied at $\bw=\bw^{(N-1)}$; further, the output of \Cref{alg:main} is exactly $\bwhat$ from \Cref{lem:coordinatewise-approx} with $\bw=\bw^{(N-1)}$; hence, \Cref{lem:coordinatewise-approx} implies $(1-\eps)\wbar_i \leq \what_i\leq (1+\eps)\wbar_i$ for all $i\in [k]$ for the output of \Cref{alg:main}. Each of the $N$ iterations of \Cref{alg:main} computes  one exact  leverage score vector and  sums the entries within each block, implying the claimed  complexity. 
\end{proof}

\section{Two additional perspectives on the Lewis fixed-point iteration}\label{sec:intuition}
\looseness=-1By \Cref{sec:main-results}, the update in \Cref{eq:Lewis-update} contracts the KL divergence to the true Lewis weights by $1-\frac{2}{p}$ at each iteration, driving the $\widetilde{O}(p)$ rate in \Cref{thm:blkLW}. We give two additional perspectives on this iteration.

\subsection{Global dynamics: a volume sampling perspective}\label{sec:global-intuition}

First, we present the volume sampling perspective on the \emph{row-wise} Lewis update $\bw\mapsto\calT(\bw)$. The main idea is: 
Under the external tilt $\bw^{1-2/p}$, the marginal inclusion probability vector of the resulting volume sampling distribution is exactly $\calT(\bw)$. 
Combining this with entropic independence~\cite{anari2022entropic} gives an alternate proof of the KL contraction in \Cref{lem:kl-contraction} and explains the appearance of the factor $1-\frac{2}{p}$.
Below, we elaborate on this argument in the row-wise setting; the same argument extends to blocks by grouping the row marginals within each block.

  For any $\bv\in\Rmspos$, let $\pi_{\bv}$ denote the externally tilted volume sampling distribution 
  over $n$-element subsets of the $m$ rows of $\bA$:
  \[\pi_{\bv}(S) := \frac{(\det(\bA_S))^2\prod_{i \in S} v_i}{Z(\bv)},\, S\subseteq[m], |S| = n; \quad Z(\bv) := \sum_{\substack{S\subseteq[m],\\|S|=n}} \det(\bA_S)^2\prod_{i\in S}v_i.\numberthis\label{eq:EI-con-0}\] 
Since $\det(\bA_S)^2\prod_{i\in S}v_i = \det\left(\left(\Diag(\sqrt{\bv})\bA\right)_S\right)^2$, one can equivalently describe $\pi_{\bv}$ as the volume sampling distribution induced by the matrix $\Diag(\sqrt{\bv})\bA$. For this $n$-element volume sampling distribution, it is known that the marginal probability that row $i$ is included equals the $i^\mathrm{th}$ leverage score of $\Diag(\sqrt{\bv})\bA$~\cite[Proposition 4]{derezinski2018reverse}\cite[Theorem 8]{derezinski2021determinantal}, i.e., \[ \Prob_{S\sim \pi_{\bv}}\left(i\in S\right) = \levsc{i}\left(\Diag\left(\sqrt{\bv}\right)\bA\right), \, \text{ for all } i\in[m]. \numberthis\label{eq:EI-con-01}\] We now specialize \Cref{eq:EI-con-01} to $\bv=\bw^{\omrp}$ and $\bv=\bwbar^\omrp$ and use $\Diag\left(\sqrt{\bw^\omrp}\right)\bA = \bW^{\omrp/2}\bA$ and the fixed point characterization $\bwbar=\calT(\bwbar)$  to conclude the following facts we shortly use: \begin{equation}\begin{aligned}\Prob_{S \sim \pi_{\bw^\omrp}}\left(i\in S\right) &= \levsc{i}\left(\bW^{\omrp/2}\bA\right) = \calT_i(\bw), \, \text{ for all } i\in[m]. \\
\Prob_{S\sim\pi_{\bwbar^\omrp}}\left(i\in S\right) &= \bwbar_i, \, \text{ for all } i\in[m]. \label{eq:EI-con-02}\end{aligned}\end{equation} 
We are now ready to use the above established facts. 
To this end, we invoke entropic indepdence~\cite{anari2022entropic}, a strong
data-processing property for distributions on fixed-size sets: Let $\mu$ and $\nu$ be two distributions on $n$-element subsets of $[m]$, with marginal inclusion probability vectors $\ba$ and $\bb$, respectively. If, after sampling a set, we retain only one uniformly chosen random element, the resulting distributions on $[m]$ 
are $\frac{\ba}{n}$ and $\frac{\bb}{n}$. The standard data-processing inequality  implies that this operation cannot increase KL divergence; it was shown~\cite{anari2022entropic} that if, however, $\nu$ is entropically independent with parameter $\alpha=1$, the following stronger bound holds: 
\[ 
\KL{\frac{\ba}{n}}{\frac{\bb}{n}} \leq \frac{1}{n}\KL{\mu}{\nu}. 
\]
 Equivalently, using
\Cref{eq:notation-KLdiv} on the left-hand side, we have
\[
\kldiv{\ba}{\bb}\leq \KL{\mu}{\nu}.
\numberthis\label{eq:EI-con-1}
\]In our context, fix $\bw\in\Omega$ and set
$\mu=\pi_{\bwbar^\omrp}$ and $\nu=\pi_{\bw^\omrp}$.
The distribution $\nu=\pi_{\bw^\omrp}$ is a determinantal volume sampling measure and hence  entropically independent with parameter $\alpha=1$~\cite{anari2022entropic,anari2025optimal}.
Therefore, \Cref{eq:EI-con-1} applies.
By \Cref{eq:EI-con-02}, the marginal vectors of $\mu$ and $\nu$ are $\bwbar$ and $\calT(\bw)$, respectively, and hence
\[
\kldiv{\bwbar}{\calT(\bw)}
\leq
\KL{\pi_{\bwbar^\omrp}}{\pi_{\bw^\omrp}}.
\numberthis\label{eq:EI-con-2}
\] We now expand the right-hand side above: \begin{align*}\KL{\pi_{\bwbar^\omrp}}{\pi_{\bw^\omrp}} &=  \E_{S\sim \pi_{\bwbar^\omrp}}\left(\log\left(\frac{\pi_{\bwbar^\omrp}(S)}{\pi_{\bw^\omrp}(S)}\right)\right)  \\  
&= \omrp\E_{S\sim \pi_{\bwbar^\omrp}} \left( \sum_{i \in S}\log\left(\frac{\wbar_i}{w_i}\right)\right)  + \log\left(\frac{Z(\bw^\omrp)}{Z(\bwbar^\omrp)}\right) \\ 
&= \omrp\sum_{i=1}^m \Prob_{S\sim \pi_{\bwbar^\omrp}}(i\in S) \log\left(\frac{\wbar_i}{w_i}\right) + \log\left(\frac{Z(\bw^\omrp)}{Z(\bwbar^\omrp)}\right) \\ 
&= \gamma\sum_{i=1}^m \wbar_i \log\left(\frac{\wbar_i}{w_i}\right) + \log\left(\frac{Z(\bw^\omrp)}{Z(\bwbar^\omrp)}\right) \\
&= \gamma \kldiv{\bwbar}{\bw} +\log\det\bG(\bw)-\log\det\bG(\bwbar),\numberthis\label{eq:EI-con-3}
\end{align*} where the first step is by the definition of the usual KL divergence, the second step by \Cref{eq:EI-con-0}, the third step by representing membership in $S$ via  indicator variables and using linearity of expectations, the fourth step by \Cref{eq:EI-con-02}, and the fifth step uses Cauchy-Binet \Cref{{eq:logdetG-1}} and our definition of $\bG$ to conclude $Z(\bw^\omrp) = \det\left(\bA^\top\bW^\omrp\bA\right)=\det(\bG(\bw))$. Finally, by the variational characterization of Lewis weights~\cite{cohen2015lp,lee2019solving}, we have $\bwbar\in\arg\max_{\bu\in\Omega}\log\det\bG(\bu),$ which simplifies 
\Cref{eq:EI-con-3} to \[ \KL{\pi_{\bwbar^\omrp}}{\pi_{\bw^\omrp}} \leq \omrp \kldiv{\bwbar}{\bw}.\numberthis\label{eq:EI-con-4}\] Chaining together \Cref{eq:EI-con-2} and \Cref{eq:EI-con-4}  yields $\kldiv{\bwbar}{\calT(\bw)}\leq \left(1-\frac{2}{p}\right)\kldiv{\bwbar}{\bw}$, thereby recovering the  contraction in \Cref{lem:kl-contraction} in the row-wise setting. 

Thus, the above calculations explain the $1-\frac{2}{p}$ factor: the \Cref{eq:Lewis-update} corresponds to the external-field tilt $\bw^{\omrp}$, and taking the log-likelihood ratio between the two tilted volume sampling distributions pulls this exponent directly outside their KL divergence. 

\subsection{Local dynamics: linearization at the fixed point}
\label{sec:local-dyn}
The above volume sampling perspective explains the \emph{global} KL contraction in \Cref{lem:kl-contraction} and the associated $1-\frac{2}{p}$ factor; however, this contraction does not characterize the asymptotic convergence rate of the iterates near the fixed point $\bwbar$. Motivated by stable geometric decay we observe in exploratory numerical experiments, we develop a more precise theory of these \emph{local} dynamics. 
To this end, we linearize the row-wise \Cref{eq:Lewis-update} around $\bwbar$; the spectral radius of the resulting Jacobian predicts the asymptotic local contraction rate. 
As seen in \Cref{fig:local-dyn}, this predicted rate is closely matched by the observed dynamics on independently generated test instances.

\begin{lemma}\label{lem:local-dyn} Let  $\wbar\in\Rmspos$ be the row-wise Lewis weights of $\bA$, and let $\calT(\bw):=\levsc\left(\bW^{\omrp/2}\bA\right)$ denote \Cref{eq:Lewis-update} applied row-wise. Define the  matrices $\bP:= \bWbar^{\omrp/2}\bA\left(\bA^\top \bWbar^\omrp \bA\right)^{-1}\bA^\top\bWbar^{\omrp/2}\in\R^{m\times m}$, and $\bK:= \bWbar^{-1}(\bP\circ\bP) \in\R^{m \times m}$. Define the Jacobian $\bJ:= \left.\left[\frac{\partial}{\partial x_j}\log\calT_i(e^{\bx})\right]\right|_{i, j\in [m]}\lvert_{\bx = \log\bwbar}$. Then, $\bJ = \omrp(\bI-\bK).$ Moreover, all eigenvalues of $\bK$ lie in $[0,1]$. Therefore, the spectral radius of $\bJ$ is
\[
\mu_{\mathrm{pred}}
=
\left(1-\frac{2}{p}\right)
\left(1-\lambda_{\min}(\bK)\right).
\]
\end{lemma}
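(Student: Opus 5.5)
The plan is to compute the Jacobian directly from the formula $\log\calT_i(e^{\bx}) = \omrp x_i + \log\left(\ba_i^\top \bG(e^{\bx})^{-1}\ba_i\right)$, where $\bG(e^{\bx}) = \sum_j e^{\omrp x_j}\ba_j\ba_j^\top$. The first term contributes $\omrp\bI$ to $\bJ$. For the second term, use $\partial \bG^{-1}/\partial x_j = -\omrp e^{\omrp x_j}\bG^{-1}\ba_j\ba_j^\top\bG^{-1}$. This gives
\[
\frac{\partial}{\partial x_j}\log\left(\ba_i^\top\bG^{-1}\ba_i\right) = -\omrp\,\frac{e^{\omrp x_j}\left(\ba_i^\top\bG^{-1}\ba_j\right)^2}{\ba_i^\top\bG^{-1}\ba_i}.
\]
At $\bx=\log\bwbar$, the entries of $\bP$ are $P_{ij} = \wbar_i^{\omrp/2}\wbar_j^{\omrp/2}\ba_i^\top\bG(\bwbar)^{-1}\ba_j$. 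The fixed-point identity gives $\ba_i^\top\bG(\bwbar)^{-1}\ba_i = \wbar_i^{1-\omrp}$, equivalently $P_{ii}=\wbar_i$. Multiplying the numerator and denominator by $\wbar_i^{\omrp}$, the derivative becomes $-\omrp P_{ij}^2/\wbar_i = -\omrp K_{ij}$. Hence $\bJ = \omrp(\bI-\bK)$.

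For the eigenvalue claim, note that $\bK = \bWbar^{-1}(\bP\circ\bP)$ is similar to the symmetric matrix $\bM := \bWbar^{-1/2}(\bP\circ\bP)\bWbar^{-1/2}$, so its eigenvalues are real. Since $\bP$ is a PSD orthogonal projection, the Schur product theorem makes $\bP\circ\bP$ PSD, and so $\bM\succeq\bzero$; this gives eigenvalues $\geq 0$. For the upper bound, I would show $\bP\circ\bP \preceq \Diag(\bP) = \bWbar$; then $\bM\preceq\bI$. To prove this, take any $\by$. Writing $\bP=\bU\bU^\top$ with orthonormal columns and $\bu_i$ its rows, we have
\[
\by^\top(\bP\circ\bP)\by = \sum_{i,j} y_iy_j\langle\bu_i,\bu_j\rangle^2 = \left\|\sum_i y_i\bu_i\bu_i^\top\right\|_F^2.
\]
This Frobenius norm is at most $\sum_i y_i^2\|\bu_i\|^2 = \sum_i y_i^2 P_{ii}$, the step being the inequality $\|\bU^\top\Diag(\by)\bU\|_F^2 = \Tr(\bP\bY\bP\bY)\leq \Tr(\bP\bY^2)$, which follows since $\bP\preceq\bI$ gives $\Tr(\bY\bP\bY\bP) \le \Tr(\bY\bP\bY)$ (apply $\bP\preceq\bI$ inside $\Tr(\bP^{1/2}\bY\,\bP\,\bY\bP^{1/2})$). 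Alternatively, one can note that $\bK\one=\one$ and $\bK$ is entrywise nonnegative, i.e., row-stochastic (because $\sum_j P_{ij}^2 = P_{ii}=\wbar_i$), so its spectral radius is $1$. This second route is quicker and I would use it for the upper bound.

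Finally, the eigenvalues of $\bJ$ are $\omrp(1-\lambda)$ for $\lambda\in[0,1]$ an eigenvalue of $\bK$. These are all nonnegative, so the spectral radius is attained at $\lambda_{\min}(\bK)$, giving $\mu_{\mathrm{pred}}=\omrp(1-\lambda_{\min}(\bK))$. The main obstacle is bookkeeping in the derivative: the normalization by $\wbar_i$ must come out exactly as $\bK$, which requires using the fixed-point identity $P_{ii}=\wbar_i$ at the right place. The spectral facts are standard.
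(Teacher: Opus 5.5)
Your proof is correct. It has the same skeleton as the paper's proof: compute $\bJ$ at the fixed point, then control the spectrum of $\bK$ through its similarity to $\bWbar^{-1/2}(\bP\circ\bP)\bWbar^{-1/2}$. The difference is that you replace both cited ingredients with self-contained arguments.

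\textbf{The Jacobian.} The paper invokes \cite[Lemma~49]{lee2019solving} for the derivative of leverage scores with respect to the row scaling $\bq=e^{\omrp\bx/2}$, and then applies the chain rule twice. You differentiate $\log\calT_i(e^{\bx})=\omrp x_i+\log(\ba_i^\top\bG^{-1}\ba_i)$ directly. This reaches the same formula $J_{ij}=\omrp(\delta_{ij}-P_{ij}^2/P_{ii})$ with one line of matrix calculus. It also makes clear that the fixed-point property is used only to replace $P_{ii}$ by $\wbar_i$.

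\textbf{The spectrum of $\bK$.} The paper cites \cite[Lemma~47]{lee2019solving} for $\bzero\preceq\bP\circ\bP\preceq\bWbar$. You get the lower bound from the Schur product theorem. For the upper bound you observe that $\bK$ is entrywise nonnegative with $\bK\one=\one$, since $\sum_j P_{ij}^2=(\bP^2)_{ii}=P_{ii}=\wbar_i$. Hence $\rho(\bK)\le\|\bK\|_\infty=1$.

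This is the same underlying fact in another guise: $\bWbar-\bP\circ\bP$ is a weighted graph Laplacian. Your version is elementary, though, and it shows in addition that $1$ is an eigenvalue of $\bK$. Equivalently, $\bJ\one=\bzero$, which reflects the scale invariance $\calT(c\bw)=\calT(\bw)$.

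Your alternative trace-inequality route to $\bP\circ\bP\preceq\Diag(\bP)$ is also valid; note that the quantity being bounded there is the \emph{squared} Frobenius norm. As you say, the stochastic-matrix route is shorter.
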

\begin{proof}
For this proof, we introduce the shorthand notation  $\bq(\bx):= e^{\omrp \bx/2}$,   $\bQ(\bx):= \Diag(\bq(\bx))$, and $\bP(\bx):= \bQ(\bx)\bA\left(\bA^\top\bQ(\bx)^2\bA\right)^{-1}\bA^\top \bQ(\bx)$. Then, it follows that \[\diag(\bP(\bx))= \levsc(\bQ(\bx)\bA) = \calT(e^{\bx}).\numberthis\label{eq:locdyn-0}\] Fix $\bx\in\R^m$ so that $\bq(\bx)\in\R^m$ is fixed too. Then, specializing \cite[Lemma~49]{lee2019solving} to $\bQ(\bx)\bA\in\R^{m\times n}$ and using \Cref{eq:locdyn-0} gives, for any perturbation $\bu\in\R^m$, \[\frac{d}{dt} \levsc\left(\Diag\left(\bq(\bx) + t\bu\right)\bA\right)|_{t=0} = 2\left(\Diag(\calT(e^{\bx})) - \bP(\bx)\circ\bP(\bx) \right)\bQ(\bx)^{-1}\bu.\]Next, perturbing $\bx$ in the direction $\bh$ gives $\frac{d}{dt}\bq(\bx+t\bh)|_{t=0} = \frac{\omrp}{2}\bQ(\bx)\bh.$ 
Therefore,  chain rule gives \begin{align*}\frac{d}{dt}\calT(e^{\bx+t\bh})|_{t=0} &= 2\left(\Diag(\calT(e^{\bx})) - \bP(\bx)\circ\bP(\bx)\right)\bQ(\bx)^{-1}\left(\frac{\omrp}{2}\bQ(\bx)\bh\right)\\ 
&= \omrp\cdot\left(\Diag(\calT(e^{\bx})) - \bP(\bx)\circ\bP(\bx)\right)\bh. 
\end{align*} Again, by chain rule, we have \begin{align*}
\frac{d}{dt}\log\calT(e^{\bx+t\bh})|_{t=0} &= \Diag(\calT(e^{\bx}))^{-1}\frac{d}{dt}\calT(e^{\bx+t\bh})|_{t=0}\\ 
&= \omrp\left(\bI- \Diag(\calT(e^{\bx}))^{-1}\left(\bP(\bx)\circ\bP(\bx)\right)\right)\bh.\numberthis\label{eq:locdyn-2}\end{align*} Setting $\bx=\log\bwbar$ and using the fact that $\bwbar=\calT(\bwbar)$ implies $\bP(\log\bwbar)= \bP$; plugging these, along with the definition of $\bK$, in \Cref{eq:locdyn-2} gives the claimed Jacobian. 

Next, we bound the spectrum of $\bK$. By Part (2) of \cite[Lemma~47]{lee2019solving}, we have 
$
    \bzero
    \preceq
    \bP\circ\bP
    \preceq
    \bWbar.
$
Pre and post multiplying by $\bWbar^{-1/2}$, this simplifies to 
\[
    \bzero
    \preceq
    \bWbar^{-1/2}
    (\bP\circ\bP)
    \bWbar^{-1/2}
    \preceq
    \bI.
\] Finally, note that $\bK$ is similar to the above $\bWbar^{-1/2}(\bP\circ\bP)\bWbar^{-1/2}$, thus implying $0\leq \lambda_i(\bK)\leq 1$ for all $i = 1, 2, \dotsc, m$. This immediately yields the claimed spectral radius, thus finishing the proof. 
\end{proof}

\noindent Thus, unlike the global KL factor $1-\frac{2}{p}$, the local contraction rate depends on the local geometry and can be strictly faster. We now explain how we test this theory in our experiments. 

\paragraph{Connection to the measured residual.}
To connect $\mu_{\mathrm{pred}}$ from \Cref{lem:local-dyn} to the quantity measured in \Cref{sec:expts-A0}, let $\bx^{(t)}:=\log\bw^{(t)}$ and $\bxbar:=\log\bwbar$. 
Rather than measuring the error to the fixed point directly, we use the residual 
\[
R_t
:=
\max_{i\in[m]}
\left|
\log\frac{w_i^{(t+1)}}{w_i^{(t)}}
\right|
=
\left\|
\bx^{(t+1)}-\bx^{(t)}
\right\|_\infty.
\]
By the definition of $\bJ$ in \Cref{lem:local-dyn} and a Taylor expansion of the map $\bx\mapsto\log\calT(e^{\bx})$ around $\bxbar$, as the iterates approach the fixed point we have
\[
\bx^{(t+2)}-\bx^{(t+1)}
=
\bJ\left(\bx^{(t+1)}-\bx^{(t)}\right)
+
o\left(
\left\|\bx^{(t+1)}-\bx^{(t)}\right\|
\right).
\]Thus, the Jacobian $\bJ$ from \Cref{lem:local-dyn} controls the local decay of our residual.
In particular, when the local trajectory has a nonzero component in an eigenspace associated with $\rho(\bJ)$, we expect $
\frac{R_{t+1}}{R_t}
\longrightarrow
\rho(\bJ)
=
\mu_{\mathrm{pred}}.$ 
Equivalently, in the local regime,
$
R_t \approx C\mu_{\mathrm{pred}}^t,
$ 
so that $\log R_t$ is approximately linear in $t$ with slope $\log\mu_{\mathrm{pred}}$. 
In \Cref{sec:expts-A0}, we estimate the observed factor $\mu_{\mathrm{obs}}$ from this decay and compare it directly with $\mu_{\mathrm{pred}}$ (see \Cref{fig:local-dyn}). Thus, \Cref{lem:local-dyn} yields a directly observable prediction for the asymptotic decay of our residual.

\section{Numerical experiments}
\label{sec:expts}
We implemented our algorithm and tested it on synthetic instances and real data. 

\subsection{Synthetic experiments}\label{sec:syn_expt}
We begin with a question about local dynamics: we ask if the asymptotic contraction observed near the Lewis weight fixed point agrees with the theory developed in \Cref{sec:local-dyn}.
We then ask whether the global $\widetilde O(p)$ dependence in \Cref{cor:ordinary-main} is reflected in the observed iteration counts of the ordinary Lewis weight iteration. 
We first study this question on i.i.d.\ Gaussian matrices, and then check whether the same qualitative behaviour persists under two departures from this baseline: first on a structured, well-conditioned matrix family with rows at two significantly different scales, and then on ill-conditioned matrices designed to stress the numerical implementation.
Finally, we turn from row-wise to block Lewis weights and ask whether the block Lewis weight iteration exhibits the analogous qualitative dependence on $p$ suggested by \Cref{thm:blkLW}.

\subsubsection{Local dynamics}
\label{sec:expts-A0}

To test the local prediction in \Cref{sec:local-dyn}, we generate $30$ independent matrices $\bA\in\R^{100\times20}$ with i.i.d.\ standard Gaussian entries and repeatedly apply the row-wise specialization of \Cref{eq:Lewis-update} for $14$ values of $p$ between $3$ and $64$. We measure progress by
$R_t:=\max_{i\in[m]}\left|\log\left(\frac{w_i^{(t+1)}}{w_i^{(t)}}\right)\right|$,
the maximum coordinatewise change between successive iterates; thus $R_t=0$ exactly at a fixed point, while $R_t>0$ quantifies the maximum coordinatewise violation of the fixed-point relation.

For each matrix and $p$, \Cref{lem:local-dyn} predicts the asymptotic contraction factor
$\mu_{\mathrm{pred}}=\left(1-\frac{2}{p}\right)(1-\lambda_{\min}(\bK))$.
We estimate the observed factor $\mu_{\mathrm{obs}}$ by fitting $\log R_t$ against $t$ over the prespecified window $10^{-8}\leq R_t\leq10^{-4}$, chosen to probe this local regime while avoiding the floating-point noise floor. This gives $30\times14=420$ prediction--observation comparisons.

\begin{figure}[!htbp]
    \centering
    \includegraphics[width=0.95\textwidth]{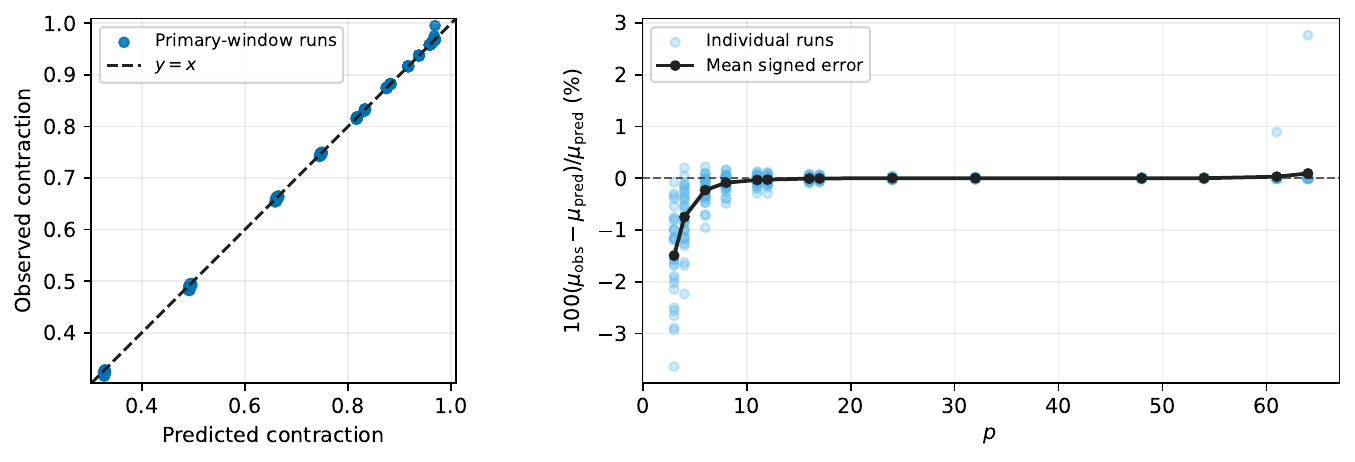}
    \caption{Predicted and observed local contraction factors for our Lewis weight iteration on Gaussian matrices.}
    \label{fig:local-dyn}
\end{figure}

The predicted and observed contraction factors closely agree across the tested range of $p$ (\Cref{fig:local-dyn}). In $418$ of the $420$ runs, a straight-line fit to $\log R_t$ over the primary window explains at least $99\%$ of its variation, consistent with geometric decay. The remaining two runs, at $p=61$ and $p=64$ for the same matrix, reach a floating-point noise floor near $10^{-8}$; over the prespecified earlier window $10^{-7}\leq R_t\leq10^{-4}$, all $420$ runs satisfy the same $99\%$ criterion. Thus the observed local dynamics provide empirical support for the prediction in \Cref{lem:local-dyn}.

\subsubsection{{Row-wise Lewis weights: iteration count and robustness}}\label{sec:A1-A2-A4}
\looseness=-1We now study the iteration counts and numerical robustness of the row-wise Lewis update.

\paragraph{Iteration count of row-wise Lewis weights.}
We first ask how the row-wise iteration count varies with $p$ on generic dense instances. We generate 50 fixed matrices $\bA\in\R^{100\times20}$ with i.i.d.\ standard Gaussian entries and reuse them across an irregular $p$-grid from $3$ to $127.3$. Starting from the uniform initialization, we use $R_t:=\max_i\left|\log\frac{w_i^{(t+1)}}{w_i^{(t)}}\right|<10^{-8}$ as our primary stopping rule, and also record the first crossing of $10^{-4}$ and $10^{-6}$. For each $p$, the left panel of \Cref{fig:A1} summarizes the $10^{-8}$ stopping time across the 50 matrices by its median and empirical 20th and 80th percentiles, while the right reports the median stopping time at all three thresholds.

\begin{figure}[!htbp]
    \centering
    \includegraphics[width=0.95\textwidth]{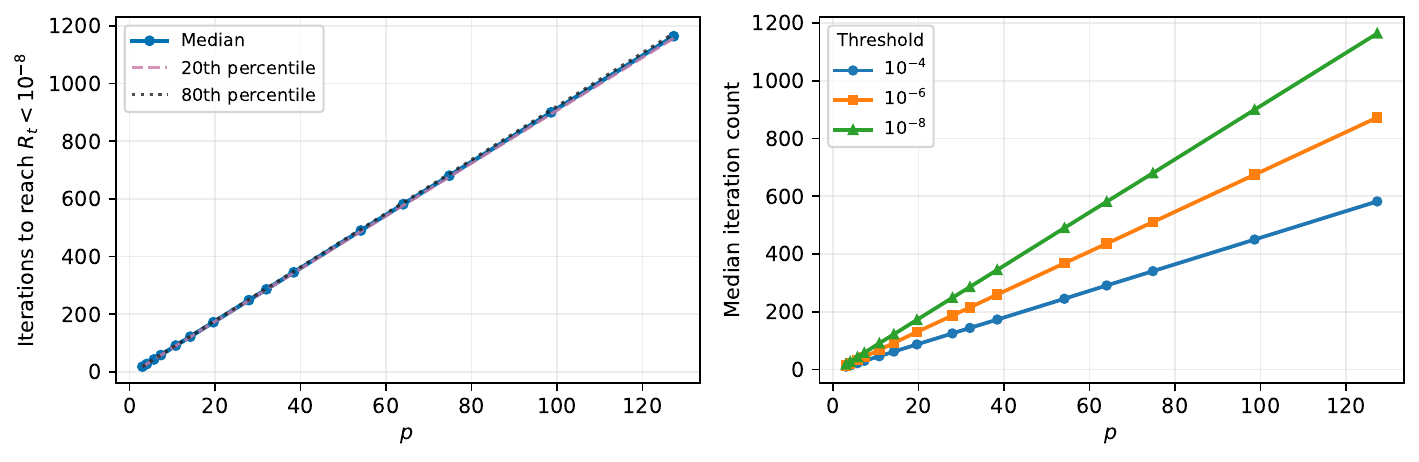}
    \caption{Row-wise Lewis weight round scaling with $p$ (left) and sensitivity to the stopping threshold (right). }
    \label{fig:A1}
\end{figure}

Across the tested $p$, the iteration count grows approximately linearly with $p$, varying little across instances; this qualitative behaviour persists at thresholds $10^{-4}$ and $10^{-6}$ (\Cref{fig:A1}). 
Thus, at fixed residual thresholds, our experiments are consistent with the $\widetilde O(p)$ dependence in \Cref{cor:ordinary-main}.

\paragraph{Structured row geometry.}
We next ask whether the behaviour above is peculiar to generic Gaussian matrices.
For each of 50 instances, we construct
$\bA =
    \begin{bmatrix}
        \bI_{20}\\[1mm]
        \frac{1}{4} \widetilde{\bG}
    \end{bmatrix}
    \in \R^{100\times 20}$,
where the 80 rows of $\widetilde{\bG}$ are independent Gaussian directions normalized to unit Euclidean norm. Thus the 20 ``anchor'' rows have norm one and the 80 ``cloud'' rows norm $\frac14$. This creates leverage heterogeneity without poor conditioning: the median anchor/cloud $p=2$ leverage ratio is about $16.8$, while the spectral condition number
$\kappa_2(\bA):=\frac{\sigma_{\max}(\bA)}{\sigma_{\min}(\bA)}$
ranges from roughly $1.16$ to $1.21$; see the right panel of \Cref{fig:structured-ordinary}.

\begin{figure}[!htbp]
    \centering
    \includegraphics[width=0.95\textwidth]{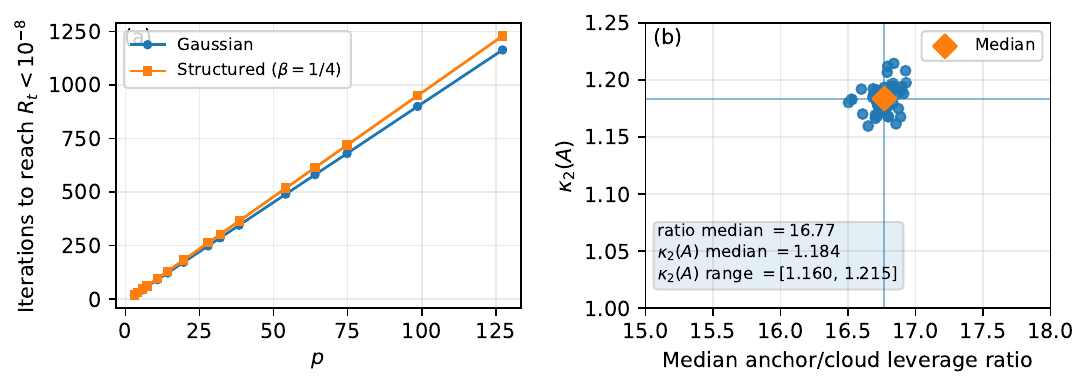}
    \caption{Row-wise Lewis weight round counts on Gaussian and structured matrices (left), and leverage heterogeneity versus conditioning for the structured family (right).}
    \label{fig:structured-ordinary}
\end{figure}

Using the same $p$-grid, initialization, and stopping rule as the Gaussian experiment (\Cref{fig:A1}), the left panel of \Cref{fig:structured-ordinary}  shows that the tested structured family closely tracks the Gaussian baseline across the full tested range of $p$, requiring approximately $5$--$7\%$ more iterations for $p\geq7.3$. Thus the same qualitative $p$-dependence observed on the Gaussian baseline persists under this leverage-heterogeneous --- but well-conditioned --- departure from generic matrices.

\paragraph{Numerical robustness under ill-conditioning.}
We next test the floating-point implementation under increasing ill-conditioning while keeping the underlying Lewis map fixed. For each of five base instances $j\in[5]$, we construct $\bA_{j,s}=\bQ_j\bD_s\bV_j^\top$, where $\bQ_j\in\R^{100\times20}$ has orthonormal columns, $\bV_j\in\R^{20\times20}$ is orthogonal, and the diagonal entries of $\bD_s$ decrease geometrically; across nine levels, $\kappa_2(\bA_{j,s})$ ranges from approximately $1$ to $10^{16}$. Indeed, with $\bA_{j,0}=\bQ_j\bV_j^\top$,
\[
    \bA_{j,s}=\bA_{j,0}(\bV_j\bD_s\bV_j^\top),
    \qquad
    \calT_{\bA_{j,s}}(\bw)=\calT_{\bA_{j,0}}(\bw),
\]
since the Lewis map is invariant under invertible right transformations. Thus the intended exact family has the same trajectory at every conditioning level. 
The saved matrices are formed in floating-point arithmetic, so observed differences reflect end-to-end finite-precision effects, including both representation and map-evaluation error.

We measure this effect both at the level of one update and over the full trajectory. For fixed $p$, let $\bw_j^{(t)}(s)$ denote the $t^\mathrm{th}$ iterate for $\bA_{j,s}$, and let $T_j(0)$ denote the stopping time of the well-conditioned $s=0$ run. We record the following three quantities:
\[
    R_{j,t}(s):=\max_{i\in[m]}\left|
        \log\frac{w_{j,i}^{(t+1)}(s)}{w_{j,i}^{(t)}(s)}
    \right|,
    \quad
    E_{j,t}(s):=\max_{i\in[m]}\left|
        \log\frac{[\calT_{\bA_{j,s}}(\bw_j^{(t)}(0))]_i}
                  {[\calT_{\bA_{j,0}}(\bw_j^{(t)}(0))]_i}
    \right|,
    \quad
    E_{j,\max}(s):=\max_{0\le t<T_j(0)}E_{j,t}(s).
\]
Thus $R_{j,t}(s)$ measures progress along the run on $\bA_{j,s}$, and $E_{j,t}(s)$ compares the two floating-point maps at the reference iterate $\bw_j^{(t)}(0)$. Exact arithmetic gives $R_{j,t}(s)=R_{j,t}(0)$ and $E_{j,\max}(s)=0$.We reuse the primary $10^{-8}$ stopping tolerance from the row-wise scaling experiment (\Cref{fig:A1}) and test six values of $p$ spanning $4$ to $127.3$, for $270$ final trajectories. 
\begin{figure}[!htbp]
    \centering
    \includegraphics[width=0.95\textwidth]{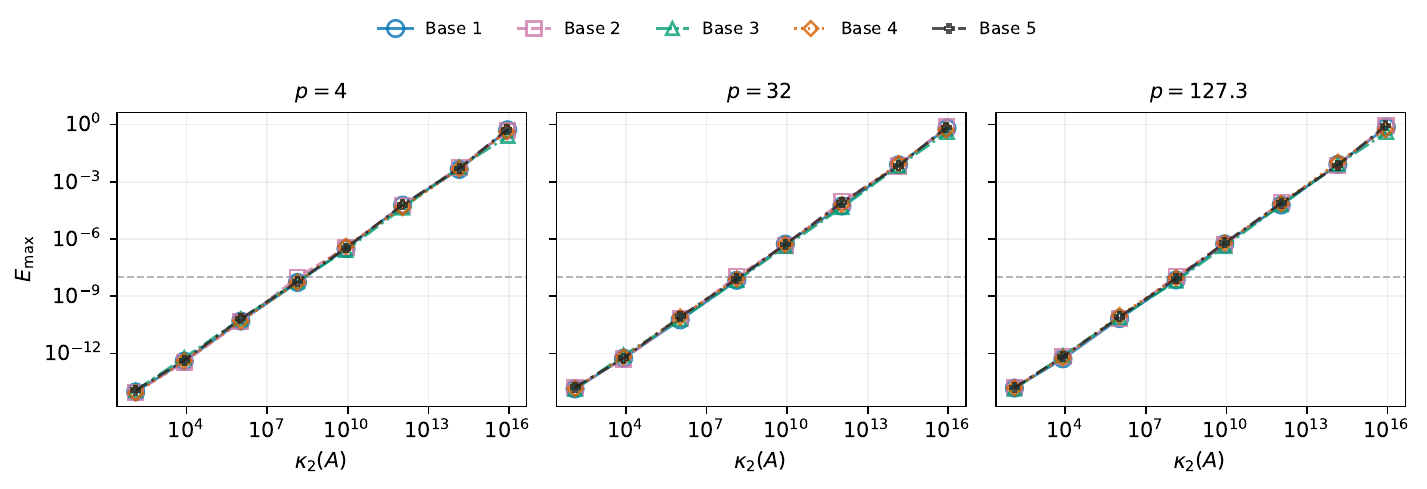}
    \caption{Finite-precision map discrepancy under increasing conditioning at representative values of $p$.}
    \label{fig:a4-map-error}
\end{figure}

As \Cref{fig:a4-map-error} shows, $E_{j,\max}(s)$ grows steadily with conditioning, reaching roughly $10^{-8}$ at $\kappa_2(\bA_{j,s})\approx10^8$. Still, the full iteration is quite stable through $s=27$ ($\kappa_2(\bA_{j,s})\approx1.3\times10^8$): all $30$ runs converge, with stopping times within about $2\%$ of their well-conditioned counterparts. At $s=33$ ($\kappa_2(\bA_{j,s})\approx8.6\times10^9$), none reaches $10^{-8}$ within $10^4$ updates. Thus, despite the growth of  $E_{j,\max}(s)$, the iteration count remains essentially unchanged over a broad conditioning range, until a finite-precision accuracy floor prevents the prescribed tolerance from being reached.

\begin{figure}[!htbp]
    \centering
    \includegraphics[width=.95\linewidth]{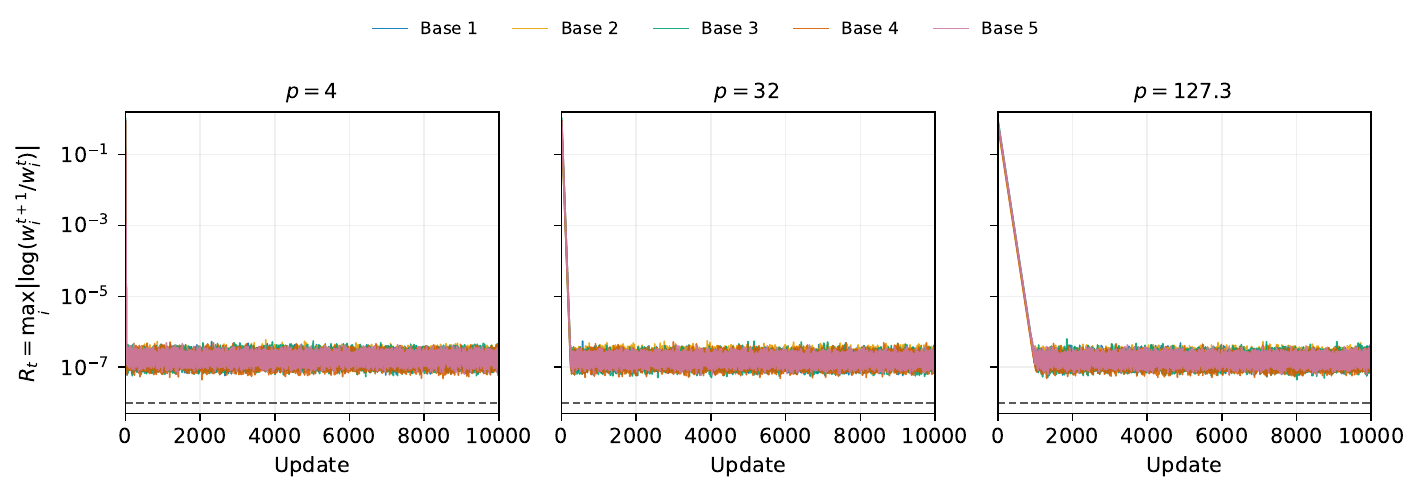}
    \caption{Residual floors under ill-conditioning at representative values of $p$.}
    \label{fig:a4-residual-floor}
\end{figure}

\Cref{fig:a4-residual-floor} explains this failure. Across all $30$ runs, the residual falls below $10^{-6}$ and then fluctuates around a $10^{-7}$-scale floor; \Cref{fig:a4-residual-floor} shows three representative values of $p$. Thus finite precision imposes an accuracy floor above the stopping threshold, identifying a limitation outside the exact-arithmetic guarantee of \Cref{cor:ordinary-main}.

\subsubsection{Block Lewis weights: iteration counts}
\label{sec:expt-blkLW}

We finally ask whether the block Lewis weight iteration exhibits the same qualitative $p$-dependence as the row-wise iteration in \Cref{fig:A1}. We generate 50 matrices $\bA\in\mathbb{R}^{100\times20}$ with i.i.d.\ standard Gaussian entries, partition each into $k=20$ blocks of five rows, and reuse the same $p$-grid and primary $10^{-8}$ stopping rule as before. Keeping the blocks equal-sized and Gaussian ensures that we study the block iteration separate from heterogeneity in block size, rank, or conditioning.

\begin{figure}[t]
    \centering
    \includegraphics[width=.95\linewidth]{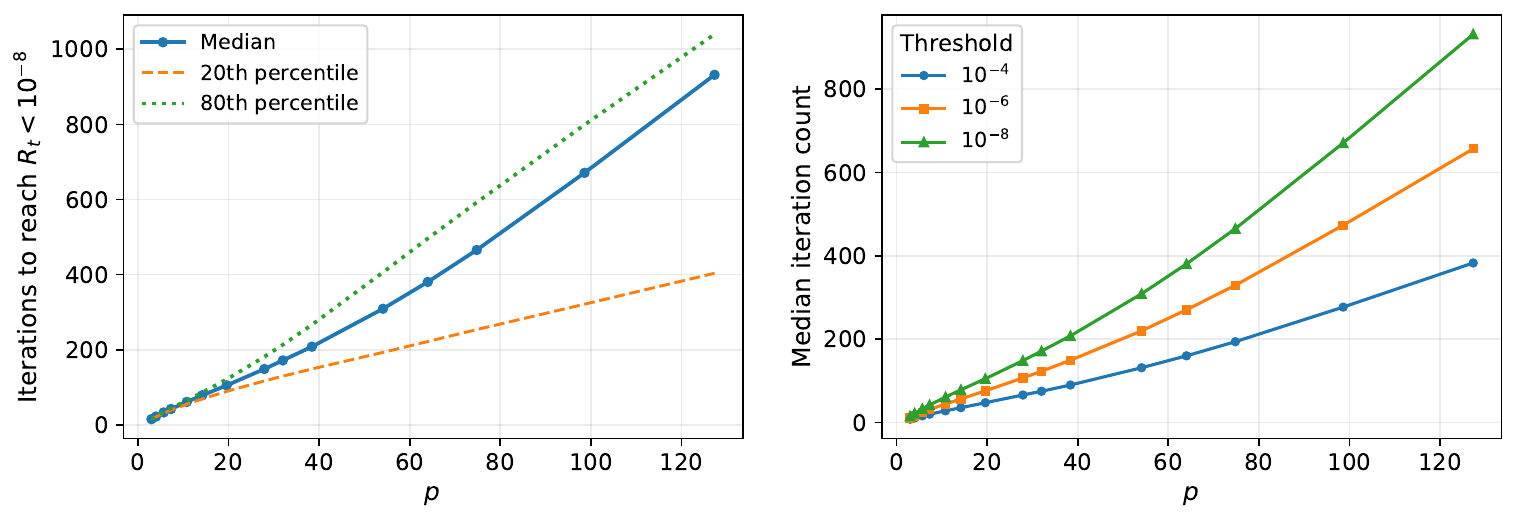}
    \caption{Block Lewis weight round scaling with $p$ (left) and sensitivity to the stopping threshold (right). }
    \label{fig:block-scaling}
\end{figure}

The left panel of \Cref{fig:block-scaling} summarizes the $10^{-8}$ stopping time across the 50 matrices by its median and empirical 20th and 80th percentiles; the right reports the median at thresholds $10^{-4}$, $10^{-6}$, and $10^{-8}$. At all three thresholds, the iteration count increases steadily with $p$, while the spread across instances widens at large $p$. Thus the tested block instances provide empirical support for the $\widetilde{O}(p)$-type dependence proved in \Cref{thm:blkLW}.

\subsection{Real-data experiments}\label{sec:real-data-expts}
We use our implementation of \Cref{alg:main} to run experiments on  the Intel Berkeley Research Lab sensor dataset~\cite{guestrin2004distributed} and meteorological data from the NOAA National Data Buoy Center. 

\subsubsection{Lewis weights and experimental design}
\label{sec:expts-design-geometry}
We now ask how block Lewis weights behave when used in experiment design. For the block Lewis weights $\bwbar(p)$, define the associated design allocation $
    \pi_i(p)
    :=
    \frac{\bar w_i(p)^{\,1-2/p}}
         {\sum_j \bar w_j(p)^{\,1-2/p}}.
$ 
These are precisely the normalized coefficients with which the block information matrices enter the regularized D-optimal-design formulation discussed in \Cref{sec:apps}. Recall that classical D-optimal design chooses an allocation to maximize the determinant of its information matrix; see, e.g., \cite{pukelsheim2006optimal}. The regularization favours less
concentrated allocations, and the resulting family interpolates from the
uniform design as $p\downarrow2$ toward scalar block D-optimal design as
$p\to\infty$. We first test whether this interpolation is visible on real
block geometries, before asking in \Cref{{sec:expts-block-failure}} how different
points along it behave under failures of different blocks. 

For each block $i\in[k]$, let $\bC_i\in\mathbb R^{r\times r}$ denote its positive-semidefinite contribution to the information matrix, so that a design $\bpi$ has information matrix $\sum_{i\in[k]}\pi_i\bC_i$. To quantify the above described interpolation, we track two complementary features of a design $\bpi$. Writing $L(\bpi):=\log\det\left(\sum_{i\in [k]} \pi_i \bC_i\right)$, define 
\[
    N_{\mathrm{eff}}(\bpi):=\frac{1}{\sum_{i\in[k]}\pi_i^2},
    \qquad
    G(\bpi):=\frac{L(\bpi)-L(U)}{L(D)-L(U)},
\]
where $U$ and $D$ denote the uniform and scalar block D-optimal designs, respectively. The effective support satisfies $N_{\mathrm{eff}}(U)=k$ and decreases as the allocation becomes more concentrated, while $G(U)=0$ and $G(D)=1$. Thus, in \Cref{fig:design-geometry}, moving left corresponds to concentrating the design, while moving up corresponds to recovering more of the available D-optimal log-determinant gain.

We evaluate this path on two real sensing datasets: the Intel Berkeley
Research Lab dataset \cite{guestrin2004distributed}, with one block per
sensor, and standard meteorological observations from the NOAA National
Data Buoy Center, with one block per Gulf buoy. These give
different block geometries on which to examine the same finite-$p$ design
path.

\begin{figure}[t]
    \centering
    \includegraphics[width=.95\linewidth]{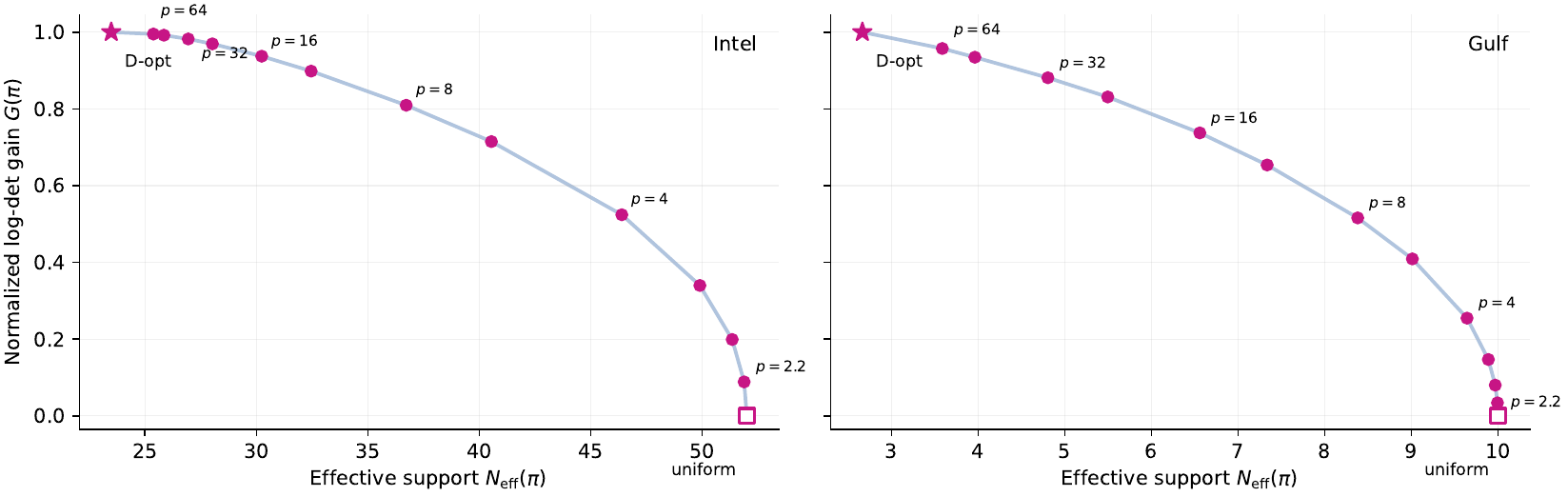}
    \caption{Finite-$p$ block Lewis design paths on Intel and Gulf; square: uniform, star: D-optimal.}
    \label{fig:design-geometry}
\end{figure}

As \Cref{fig:design-geometry} shows, increasing $p$ traces a path from near-uniform toward near-D-optimal design: $N_{\mathrm{eff}}(\bpi(p))$ decreases across the evaluated $p$-grid, while $G(\bpi(p))$ increases. The interpolation occurs at different rates on the two datasets. For example, on Intel, $p=16$  attains $G=0.94$ with $N_{\mathrm{eff}}=30.2$, while the D-optimal endpoint has $N_{\mathrm{eff}}=23.5$; thus much of the available log-determinant gain is obtained before the allocation reaches the concentration of the D-optimal endpoint.

\subsubsection{Information preservation under block unavailability}
\label{sec:expts-block-failure}

The interpolation in \Cref{fig:design-geometry} suggests a natural
question: does the reduced concentration of finite-$p$ designs preserve
information when entire blocks become unavailable?  On the same Intel
sensor and Gulf buoy datasets considered above, we fix the design
allocation before availability is realized, and do not reallocate mass
from unavailable blocks.  For a surviving set $S$, write
\[
    M_S(\bpi):=\sum_{i\in S}\pi_i\bC_i,
    \qquad
    \Phi_S(\bpi):=\det(M_S(\bpi))^{1/r}.
\]
Our primary experiment conditions on exactly $f$ unavailable blocks and
chooses their identities uniformly, thereby separating failure severity
from which particular blocks happen to be unavailable.  We report the
mean surviving information
$\mu_f(\bpi):=\mathbb E[\Phi_S(\bpi)]$ together with the probability
$\rho_f(\bpi)$ that $M_S(\bpi)$ loses rank.

\begin{figure}[t]
    \centering
    \includegraphics[width=.95\linewidth]{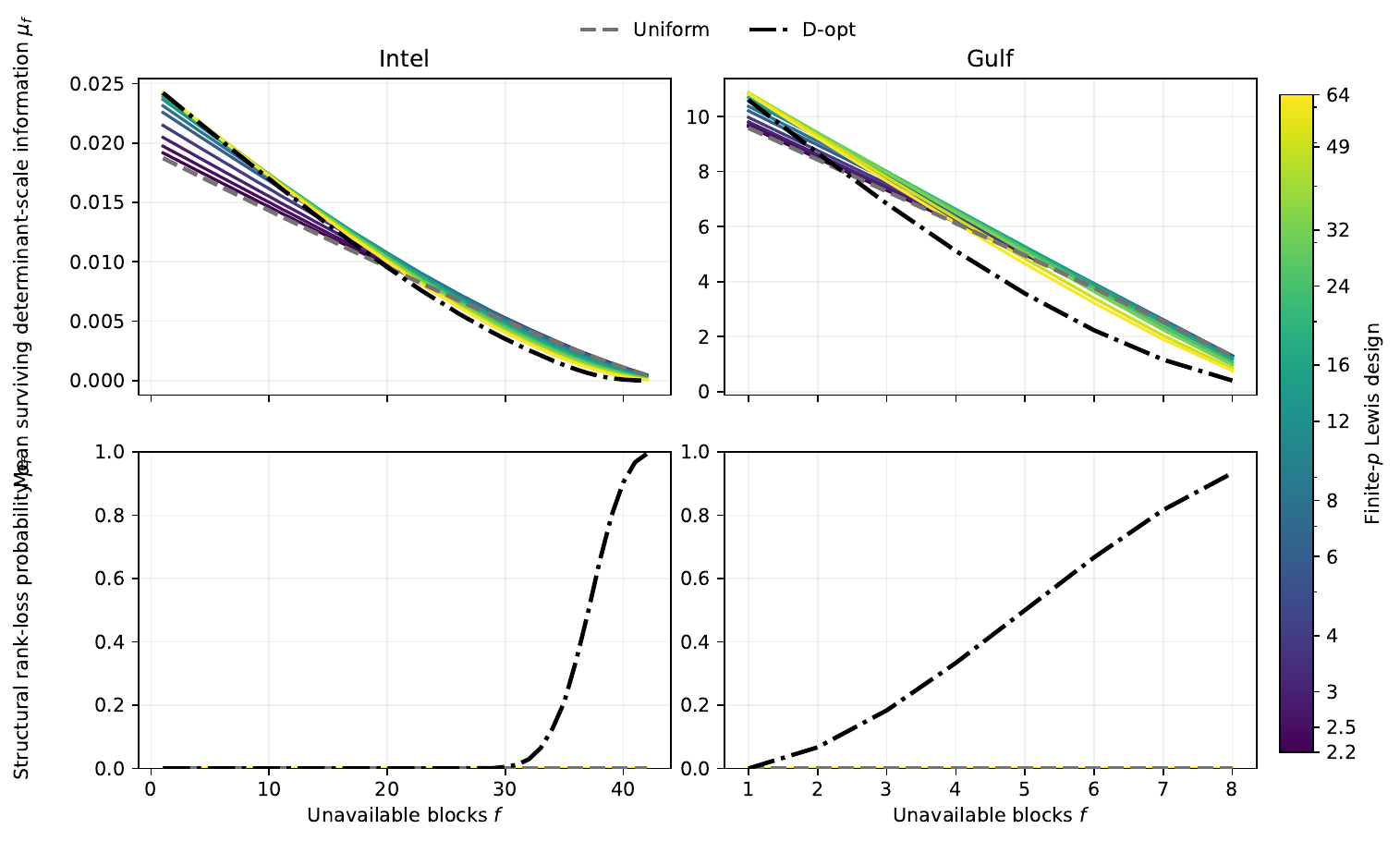}
\caption{Information preservation under block unavailability on Intel and Gulf datasets.}    \label{fig:block-failure}
\end{figure}

The top row of \Cref{fig:block-failure} shows the same qualitative
progression on both datasets.  When few blocks are unavailable, designs
near the concentrated, high-$p$ end retain the most information; as $f$
increases, the maximizing point moves steadily toward smaller
$p$.  Thus the amount of concentration favoured by the information
criterion changes with failure severity, consistent with the intuition that the
anti-concentration term acts as an  insurance
mechanism.

The bottom row of \Cref{fig:block-failure} separates this progression
from outright degeneracy.  Uniform and all finite-$p$ designs remain
full rank throughout the prespecified severity ranges, despite the
substantial crossings among their $\mu_f$ curves, so the migration
within the finite-$p$ family is not explained by rank loss.  The
D-optimal endpoint  begins to lose rank as failures become
severe, and its deterioration in that regime therefore combines reduced
surviving information with outright singularity.

Overall, the experiments indicate that different points along the
uniform-to-D-optimal Lewis path are favoured under different levels and
patterns of block unavailability.  We interpret this as descriptive
evidence consistent with the proposed anti-concentration mechanism on
these  datasets.

\section*{Acknowledgement}
We are extremely grateful to Guanghao Ye for numerous detailed  discussions about Lewis weights. We used ChatGPT interactively in developing and verifying our proofs, code, and write-up.  

\newpage 

\printbibliography

\newpage 
\appendix 
\section{Omitted proofs of supporting lemmas}\label{sec:app-technical-lemmas}
\blocklevscfact* 
\begin{proof}
The matrix $\bB(\bB^\top \bB)^{-1}\bB^\top$ is an orthogonal projector, and therefore $\bzero\preceq \bB(\bB^\top \bB)^{-1}\bB^\top \preceq \bI$. Restricting each matrix in this inequality chain to the principal submatrix of the rows corresponding to the $i^\mathrm{th}$ block in the statement of the lemma, we have $\bzero\preceq \bB_{[i]}(\bB^\top \bB)^{-1}\bB_{[i]}^\top\preceq \bI$. Therefore, the eigenvalues of $\bB_{[i]}(\bB^\top \bB)^{-1}\bB_{[i]}^\top$ all lie in $[0, 1]$. We therefore have \[\blevsc{i}(\bB)=\Tr\left( \bB_{[i]}(\bB^\top \bB)^{-1}\bB_{[i]}^\top\right)\leq \rank\left( \bB_{[i]}(\bB^\top \bB)^{-1}\bB_{[i]}^\top\right)\leq \rank(\bB_{[i]}).\]This proves the first assertion. For the second bound, we see that \[\sum_{i=1}^k \blevsc{i}(\bB) = \sum_{i=1}^k\Tr\left( \bB_{[i]}(\bB^\top \bB)^{-1}\bB_{[i]}^\top\right) = \Tr\left( \bB(\bB^\top \bB)^{-1}\bB^\top\right) = n.\]
\end{proof}

\begin{lemma}[Properties of $\bx\mapsto\log\det\bG(e^{\bx})$]\label{lem:logdetGexp-is-cvx} Following the notation of \Cref{sec:intro-notationPrelims}, let $\bA\in\R^{m\times n}$ be partitioned into nonzero row blocks $\bAblk{1}, \bAblk{2}, \dotsc, \bAblk{k}$, with $\bAblk{i}\in\R^{m_i \times n}$.  Given a vector $\bx\in\R^k$, define the function $f(\bx):= \log\det\bG(e^{\bx}) = \log\det\left(\bA^\top \Diag\left(e^{\omrp x_1}\bI_{m_1}, e^{\omrp x_2}\bI_{m_2}, \dotsc, e^{\omrp x_k}\bI_{m_k}\right)\bA\right)$. Then, the following properties hold. 
\begin{enumerate}
\item The function $f$ is convex. 
\item For every $i \in [k]$, we have that $\frac{\partial f(x)}{\partial x_i} = \omrp e^{\omrp x_i } \Tr\left(\bAblk{i}\bG(e^{\bx})^{-1}\bAblk{i}^\top\right) = \omrp \calT_i(e^{\bx}).$
\end{enumerate} 
\end{lemma}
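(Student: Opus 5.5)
The plan is to exploit the fact that $f$ is a log-determinant of a nonnegative combination of fixed PSD matrices with exponential coefficients. Write $\bG(e^{\bx})=\sum_{i=1}^k e^{\omrp x_i}\bAblk{i}^\top\bAblk{i}$. Both properties follow from this representation: the gradient by a direct differentiation, and convexity from Cauchy--Binet.

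\textbf{Gradient (part 2).} Use Jacobi's formula $\frac{\partial}{\partial x_i}\log\det\bM(\bx)=\Tr\left(\bM(\bx)^{-1}\frac{\partial \bM(\bx)}{\partial x_i}\right)$. Here $\frac{\partial}{\partial x_i}\bG(e^{\bx})=\omrp e^{\omrp x_i}\bAblk{i}^\top\bAblk{i}$. Cycling the trace gives
\[
\frac{\partial f}{\partial x_i}=\omrp e^{\omrp x_i}\Tr\left(\bAblk{i}\bG(e^{\bx})^{-1}\bAblk{i}^\top\right).
\]
This equals $\omrp\calT_i(e^{\bx})$ by the definition of $\calT$ with $\bw=e^{\bx}$, since $w_i^\omrp=e^{\omrp x_i}$. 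The function $f$ is well-defined and smooth because $\bA$ has full column rank and all weights are positive, so $\bG\succ\bzero$.

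\textbf{Convexity (part 1).} Apply Cauchy--Binet to the row-wise weighting, as is done for $Z(\bv)$ in \Cref{sec:global-intuition}:
\[
\det\bG(e^{\bx})=\sum_{S\subseteq[m],\,|S|=n}\det(\bA_S)^2\prod_{j\in S}e^{\omrp x_{b(j)}},
\]
where $b(j)$ denotes the block containing row $j$. Each summand has the form $c_S\exp(\langle \bv_S,\bx\rangle)$ with $c_S\geq 0$ and $\bv_S\in\R^k$; explicitly, $(\bv_S)_i=\omrp\,|S\cap I_i|$. Full column rank ensures that at least one $c_S>0$. Then $f$ is a log-sum-exp of affine functions of $\bx$, with the log of the positive constants $c_S$ absorbed into the affine parts. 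Log-sum-exp composed with an affine map is convex, so $f$ is convex.

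The only step needing care is the bookkeeping in the Cauchy--Binet expansion when rows are grouped into blocks. Each row's weight is $e^{\omrp x_{b(j)}}$, so the product over $S$ is still an exponential of a linear function of $\bx$, and the argument goes through unchanged. As an alternative route that avoids Cauchy--Binet, one can compute the Hessian. It equals $\omrp^2\left(\Diag(\bt)-\bM\right)$, where $\bt$ collects the block terms $e^{\omrp x_i}\Tr(\bAblk{i}\bG^{-1}\bAblk{i}^\top)$ and $\bM_{ij}=e^{\omrp(x_i+x_j)}\Tr(\bAblk{i}\bG^{-1}\bAblk{j}^\top\bAblk{j}\bG^{-1}\bAblk{i}^\top)$. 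One would then show $\Diag(\bt)\succeq\bM$ via a blockwise version of $\bP\circ\bP\preceq\Diag(\bP)$. This is the harder route, so I would use Cauchy--Binet.
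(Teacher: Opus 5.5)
Your proposal is correct and follows the paper's proof essentially step for step. The paper likewise uses Cauchy--Binet to write $\det\bG(e^{\bx})$ as a sum of terms $\det(\bA_S)^2 e^{\omrp\sum_i x_i |S\cap I_i|}$, so that $f$ is a log-sum-exp of affine functions, and it gets the gradient from Jacobi's formula and cyclicity of the trace. Your note about dropping subsets with $\det(\bA_S)=0$ and using full column rank to keep at least one term is a point of care the paper leaves implicit.
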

\begin{proof}
When $\bA$ is a square matrix, the function $f$ is affine in $\bx$ and hence convex. For rectangular matrices, we instead invoke the Cauchy-Binet formula~\cite{horn2012matrix}. 

Given a matrix $\bB\in\R^{m\times n}$ with $m\geq n$, let $\bB_S\in\R^{n\times n}$ denote the square matrix obtained by selecting $n$ rows of $\bB$. Then the Cauchy-Binet formula says \[\det\left(\bB^\top\bB\right) = \sum_{\substack{S\subseteq[m]\\ |S| = n}} \left(\det(\bB_S)\right)^2.\numberthis\label{eq:logdetG-1}\] We proceed to apply this to $\bB : = \left(\Diag\left(e^{\omrp x_1}\bI_{m_1}, e^{\omrp x_2}\bI_{m_2}, \dotsc, e^{\omrp x_k}\bI_{m_k}\right)\right)^{1/2}\bA$. Consider a subset $S\subseteq [m]$ such that $|S|=n$. Let $z_i(S) := |S\cap \bI_i|$ denote those rows of block $\bI_i$ that are present in $S$. Then, using \Cref{eq:logdetG-1} in conjunction with the fact that the determinant of a diagonal matrix is simply the product of its entries, we have $\det\bG(e^{\bx}) = \sum_{\substack{S \subseteq[m]\\ |S| = n}} \det(\bA_S)^2 e^{\omrp \cdot \sum_{i = 1}^k x_i z_i(S)}$. Consequently, we may express $f$ as \[f(\bx) = \log \left(\sum_{\substack{S \subseteq[m]\\ |S| = n}} \det(\bA_S)^2 e^{\omrp \cdot \sum_{i = 1}^k x_i z_i(S)}\right).\] We now note that, as a function of $\bx\in\R^k$, this is a log-sum-exp of affine functions of $\bx$ and  is therefore convex~\cite[Section 3.1.5]{BV04}. For completeness, we now show the computation of the gradient, though this computation can be found, e.g., in \cite[Lemma 48]{lee2019solving}, \cite[Lemma 3]{fazel2022computing}. By the gradient of the determinant,  chain rule, and the cyclic permutation property of trace, we have \[\frac{\partial f(\bx)}{\partial x_i} = \Tr\left(\bG(e^{\bx})^{-1}\frac{\partial \bG(e^{\bx})}{\partial x_i}\right) = \Tr\left(\bG(e^{\bx})^{-1}\cdot \omrp e^{\omrp x_i} \bAblk{i}^\top \bAblk{i}\right) = \omrp e^{\omrp x_i}  \Tr\left(\bAblk{i} \bG(e^{\bx})^{-1} \bAblk{i}^\top\right) = \omrp \calT_i(e^{\bx}). \] 

\end{proof}

\section{Omitted proof on sensitivities}\label{sec:sens-proff}
\sensprop*
\begin{proof} The upper bound $\sens{i}(\bA)\leq n^{p/2-1}\wbar_i$ in the first part of the proposition is known \cite[Fact~3.3]{padmanabhan2023computing}. We therefore prove the reverse inequality $\wbar_i\leq\sens{i}(\bA)$.

Set $\bB\defeq\bWbar^{1/2-1/p}\bA$, and let $\bb_j^\top$ denote the $j^\mathrm{th}$ row of $\bB$. By the definition of Lewis weights, $\wbar_j=\levsc{j}(\bB)$ for every $j\in[m]$. We use the known fact that leverage scores are precisely $\ell_2$ sensitivities \cite[Section~2]{padmanabhan2023computing}, and hence
$\wbar_j=\max_{\bx\neq\bzero}\frac{|\bb_j^\top\bx|^2}{|\bB\bx|_2^2}$.

Fix $i\in[m]$, and let $\bx^{(i)}$ be a maximizer in the above expression for $\wbar_i$, scaled so that $\|\bB\bx^{(i)}\|_2=1$. Such a maximizer exists since $\bB$ has full column rank and the objective is homogeneous. By our choice of $\bx^{(i)}$, we have
$|\bb_i^\top\bx^{(i)}|^2=\wbar_i$.
Moreover, for every $j\in[m]$, since $\wbar_j$ is the maximum of the corresponding ratio and $\bx^{(i)}$ is one feasible choice with $\|\bB\bx^{(i)}\|_2=1$, we have
$|\bb_j^\top\bx^{(i)}|^2\leq\wbar_j$.

Define $z_j\defeq\frac{|\bb_j^\top\bx^{(i)}|^2}{\wbar_j}$ for every $j\in[m]$. The preceding inequalities imply $0\leq z_j\leq1$ for every $j$, while $|\bb_i^\top\bx^{(i)}|^2=\wbar_i$ implies $z_i=1$. Further,
$\sum_{j=1}^m\wbar_jz_j
=\sum_{j=1}^m|\bb_j^\top\bx^{(i)}|^2
=\|\bB\bx^{(i)}\|_2^2
=1$.

We now evaluate the $i^\mathrm{th}$ sensitivity at  $\bx^{(i)}$. Since $\bB=\bWbar^{1/2-1/p}\bA$, for every $j\in[m]$ we have $\bb_j^\top=\wbar_j^{1/2-1/p}\ba_j^\top$, or equivalently $\ba_j^\top=\wbar_j^{1/p-1/2}\bb_j^\top$. Therefore,
$|\ba_j^\top\bx^{(i)}|^p
=\wbar_j^{1-p/2}|\bb_j^\top\bx^{(i)}|^p$.
Using $|\bb_j^\top\bx^{(i)}|^2=\wbar_jz_j$, this becomes
$|\ba_j^\top\bx^{(i)}|^p
=\wbar_j^{1-p/2}(\wbar_jz_j)^{p/2}
=\wbar_jz_j^{p/2}$.
Consequently,

$$
\frac{|\ba_i^\top\bx^{(i)}|^p}{\|\bA\bx^{(i)}\|_p^p}
=
\frac{\wbar_i z_i^{p/2}}
{\sum_{j=1}^m\wbar_jz_j^{p/2}}.
$$

Since $p>2$ and $z_j\in[0,1]$, this implies $z_j^{p/2}\leq z_j$ for every $j$, and hence
$\sum_{j=1}^m\wbar_jz_j^{p/2}\leq\sum_{j=1}^m\wbar_jz_j=1$.
Since $z_i=1$, the preceding display therefore gives
$\frac{|\ba_i^\top\bx^{(i)}|^p}{\|\bA\bx^{(i)}\|_p^p}\geq\wbar_i$.
Finally, $\sens{i}(\bA)$ is the supremum of this ratio over all nonzero $\bx$, so
$\sens{i}(\bA)\geq\wbar_i$, concluding the proof. 
\end{proof}

\end{document}

%% file: packages.tex
\usepackage[margin=1in]{geometry}
\usepackage{microtype}
\usepackage[T1]{fontenc}
\usepackage{lmodern}

\usepackage{amsmath,amssymb,amsthm,mathtools,xfrac}
\usepackage{bm}
\usepackage{mathrsfs}

\usepackage{algorithm}
\usepackage[noend]{algpseudocode}
\usepackage{booktabs}
\usepackage{graphicx}
\usepackage{caption}
\usepackage[dvipsnames]{xcolor}

\colorlet{myblue}{RoyalBlue!70!Blue}
\colorlet{mymagenta}{Magenta!50!Plum}

\usepackage[
    colorlinks=true,
    linkcolor=myblue,
    citecolor=mymagenta,
    urlcolor=MidnightBlue
]{hyperref}

\usepackage[
    backend=biber,
    backref=true, 
    style=alphabetic,
    sorting=ynt,       
    maxbibnames=999,   
    maxcitenames=999, 
    maxalphanames=999, 
    giveninits=false   
]{biblatex}

\usepackage{aliascnt}
\usepackage[nameinlink,capitalize,noabbrev]{cleveref}

%% file: formatting_macros.tex
\newcommand\numberthis{\addtocounter{equation}{1}\tag{\theequation}}

\theoremstyle{plain}

\newtheorem{theorem}{Theorem}[section]

\newaliascnt{lemma}{theorem}
\newtheorem{lemma}[lemma]{Lemma}
\aliascntresetthe{lemma}

\newaliascnt{proposition}{theorem}

\aliascntresetthe{proposition}

\newaliascnt{corollary}{theorem}
\newtheorem{corollary}[corollary]{Corollary}
\aliascntresetthe{corollary}

\newaliascnt{fact}{theorem}

\aliascntresetthe{fact}

\newaliascnt{claim}{theorem}

\aliascntresetthe{claim}

\newaliascnt{definition}{theorem}
\newtheorem{definition}[definition]{Definition}
\aliascntresetthe{definition}

\newaliascnt{assumption}{theorem}

\aliascntresetthe{assumption}

\newaliascnt{remark}{theorem}
\newtheorem{remark}[remark]{Remark}
\aliascntresetthe{remark}

\numberwithin{equation}{section}

\crefname{equation}{}{}
\Crefname{equation}{}{}
\creflabelformat{equation}{#2(#1)#3}

\crefname{theorem}{Theorem}{Theorems}
\Crefname{theorem}{Theorem}{Theorems}

\crefname{lemma}{Lemma}{Lemmas}
\Crefname{lemma}{Lemma}{Lemmas}

\crefname{proposition}{Proposition}{Propositions}
\Crefname{proposition}{Proposition}{Propositions}

\crefname{corollary}{Corollary}{Corollaries}
\Crefname{corollary}{Corollary}{Corollaries}

\crefname{fact}{Fact}{Facts}
\Crefname{fact}{Fact}{Facts}

\crefname{claim}{Claim}{Claims}
\Crefname{claim}{Claim}{Claims}

\crefname{definition}{Definition}{Definitions}
\Crefname{definition}{Definition}{Definitions}

\crefname{assumption}{Assumption}{Assumptions}
\Crefname{assumption}{Assumption}{Assumptions}

\crefname{remark}{Remark}{Remarks}
\Crefname{remark}{Remark}{Remarks}

\crefname{algorithm}{Algorithm}{Algorithms}
\Crefname{algorithm}{Algorithm}{Algorithms}

%% file: macros.tex
\newcommand{\R}{\mathbb{R}}

\newcommand{\one}{\mathbf{1}}
\newcommand{\diag}{\operatorname{diag}}
\newcommand{\Diag}{\operatorname{Diag}}

\newcommand{\Tr}{\operatorname{Tr}}
\newcommand{\rank}{\operatorname{rank}}

\newcommand{\eps}{\varepsilon}
\newcommand{\defeq}{\mathrel{\vcentcolon=}}

\newcommand{\Rmspos}{\mathbb{R}^m_{>0}}
\newcommand{\Rkpos}{\mathbb{R}_{>0}^k}
\newcommand{\Prob}{\operatorname{Prob}}
\newcommand{\E}{\mathbb{E}}

\newcommand{\bA}{\mathbf{A}}
\newcommand{\bB}{\mathbf{B}}

\newcommand{\bC}{\mathbf{C}}
\newcommand{\bD}{\mathbf{D}}

\newcommand{\bG}{\mathbf{G}}

\newcommand{\bI}{\mathbf{I}}
\newcommand{\bJ}{\mathbf{J}}
\newcommand{\bK}{\mathbf{K}}

\newcommand{\bP}{\mathbf{P}}

\newcommand{\bq}{\mathbf{q}}
\newcommand{\bQ}{\mathbf{Q}}
\newcommand{\bW}{\mathbf{W}}
\newcommand{\bV}{\mathbf{V}}
\newcommand{\bX}{\mathbf{X}}
\newcommand{\bY}{\mathbf{Y}}
\newcommand{\bzero}{\mathbf{0}}

\newcommand{\ba}{\mathbf{a}}
\newcommand{\bb}{\mathbf{b}}
\newcommand{\bh}{\mathbf{h}}
\newcommand{\bw}{\mathbf{w}}
\newcommand{\bv}{\mathbf{v}}
\newcommand{\bx}{\mathbf{x}}
\newcommand{\by}{\mathbf{y}}
\newcommand{\bu}{\mathbf{u}}
\newcommand{\bxbar}{\overline{\mathbf{x}}}
\newcommand{\bwhat}{\widehat{\mathbf{w}}}

\newcommand{\bwbar}{\overline{\mathbf{w}}}

\newcommand{\bAblk}[1]{\mathbf{A}_{[#1]}}

\newcommand{\bBblk}[1]{\mathbf{B}_{[#1]}}

\newcommand{\bpi}{\bm{\pi}}

\newcommand{\what}{\widehat{w}}
\newcommand{\wbar}{\overline{w}}
\newcommand{\Wbar}{\overline{\mathbf{W}}}
\newcommand{\bWbar}{\overline{\mathbf{W}}}

\NewDocumentCommand{\sens}{g}{%
  s^{(p)}\IfValueT{#1}{_{#1}}%
}
\newcommand{\maxsens}{s_{\max}^{(p)}}

\NewDocumentCommand{\levsc}{g}{%
  \ensuremath{\sigma\IfValueT{#1}{_{#1}}}%
}

\NewDocumentCommand{\blevsc}{g}{%
  \ensuremath{\sigma\IfValueT{#1}{_{[#1]}}}%
}

\newcommand{\calT}{\mathcal{T}}
\newcommand{\omrp}{\gamma} 
\newcommand{\kldiv}[2]{\mathcal{D}_{\mathrm{KL}}\left(#1 \,\|\, #2\right)}
\newcommand{\KL}[2]{D_{\mathrm{KL}}\left(#1 \,\|\, #2\right)} 
\newcommand{\logdetdiv}[2]{\mathcal{D}_{-\log\det}\left(#1, #2\right)}